\newcommand{\set}[1]{\left\{#1\right\}}
\newcommand{\pr}[1]{\left(#1\right)}
\newcommand{\fpr}[1]{\mathopen{}\left(#1\right)}
\newcommand{\abs}[1]{{\left|#1\right|}}
\newcommand{\enset}[2]{\left\{#1 ,\ldots , #2\right\}}
\newcommand{\enpr}[2]{\pr{#1 ,\ldots , #2}}
\newcommand{\np}{\textbf{NP}}
\newcommand{\p}{\textbf{P}}
\newcommand{\define}{\leftarrow}
\newcommand{\dispfunc}[2]{{\mathit{#1}\fpr{#2}}}
\newcommand{\sources}[1]{\dispfunc{src}{#1}}
\newcommand{\we}[1]{\dispfunc{WE}{#1}}
\newcommand{\pe}[1]{\dispfunc{PE}{#1}}
\newcommand{\range}[1]{\dispfunc{range}{#1}}
\newcommand{\inst}[1]{\dispfunc{inst}{#1}}
\newcommand{\pre}[1]{\dispfunc{pre}{#1}}
\newcommand{\tail}[1]{\dispfunc{tail}{#1}}
\newcommand{\icl}[1]{\dispfunc{icl}{#1}}
\newcommand{\tcl}[1]{\dispfunc{tcl}{#1}}
\newcommand{\lab}[1]{\dispfunc{lab}{#1}}
\newcommand{\ts}[1]{\dispfunc{ts}{#1}}
\newcommand{\node}[1]{\dispfunc{nd}{#1}}
\newcommand{\id}[1]{\dispfunc{id}{#1}}
\newcommand{\lexleq}{\leq_{\mathit{lex}}}
\newcommand{\first}[1]{\dispfunc{first}{#1}}
\newcommand{\last}[1]{\dispfunc{last}{#1}}
\newcommand{\efam}[1]{\mathcal{#1}}
\newcommand{\freq}[1]{\dispfunc{fr}{#1}}
\newtheorem{theorem}{Theorem}
\newtheorem{lemma}[theorem]{Lemma}
\newtheorem{definition}[theorem]{Definition}
\newtheorem{example}[theorem]{Example}
\begin{document}

\title{Mining Closed Episodes with Simultaneous Events}
\numberofauthors{2}
\author{
Nikolaj Tatti, Boris Cule\\
	\affaddr{ADReM, University of Antwerp, Antwerpen, Belgium} \\
	\email{firstname.lastname@ua.ac.be}
}

\maketitle
\begin{abstract}
Sequential pattern discovery is a well-studied field in data mining.
Episodes are sequential patterns describing events that often occur in the
vicinity of each other. Episodes can impose restrictions to the order of the
events, which makes them a versatile technique for describing complex patterns
in the sequence. Most of the research on episodes deals with special
cases such as serial, parallel, and injective episodes, while discovering general
episodes is understudied.

In this paper we extend the definition of an episode in order to be able to
represent cases where events often occur simultaneously. We present an
efficient and novel miner for discovering frequent and closed general episodes.
Such a task presents unique challenges. Firstly, we cannot define closure based on frequency.
We solve this by computing a more conservative closure that we use to reduce the search space
and discover the closed episodes as a postprocessing step.
Secondly, episodes are traditionally presented as directed acyclic graphs. We argue that
this representation has drawbacks leading to redundancy in the output. We solve
these drawbacks by defining a subset
relationship in such a way that allows us to remove the redundant episodes.
We demonstrate the efficiency of our algorithm and the need for using closed
episodes empirically on synthetic and real-world datasets.

\end{abstract}

\category{H.2.8}{Database management}{Database applications---Data mining}
\category{G.2.2}{Discrete mathematics}{Graph theory}

\terms{Algorithms, Theory}

\keywords{Frequent episodes, Closed episodes, Depth-first search}

\section{Introduction}
\label{sec:introduction}

Discovering interesting patterns in data sequences is a popular aspect of
data mining. An episode is a sequential pattern representing a set of events
that reoccur in a sequence~\cite{mannila:97:discovery}. In its most general
form, an episode also imposes a partial order on the events. This allows great
flexibility in describing complex interactions between the events in the sequence.

Existing research in episode mining is dominated by two special cases: parallel episodes,
patterns where the order of the events does not matter, and serial episodes, 
requiring that the events must occur in one given order. Proposals have been
made (see Section~\ref{sec:related}) for discovering episodes with partial
orders, but these approaches impose various limitations on the events.  In fact,
to our knowledge, there is no published work giving an explicit description of a
miner for general episodes.

We believe that there are two main reasons why general episodes have attracted
less interest: Firstly, implementing a miner is surprisingly difficult:  
testing whether an episode occurs in the sequence is an \np-complete
problem. Secondly, the fact that episodes are such
a rich pattern type leads to a severe pattern explosion.

Another limitation of episodes is that they do not properly address
simultaneous events. However, sequences containing such events are frequently encountered, in cases such as, for example,
sequential data generated by multiple sensors and then collected into one
stream. In such a setting, if two events, say $a$ and $b$, often occur simultaneously, existing approaches will depict this pattern as a parallel episode $\{a,b\}$,
which will only tell the user that these two events often occur near each other, in no particular order.
This is a major limitation, since the actual pattern contains much more information.

In this paper we propose a novel and practical algorithm for mining
frequent closed episodes that properly handles simultaneous events. Such a task
poses several challenges.

Firstly, we can impose four different relationships between two events $a$ and
$b$: (1) the order of $a$ and $b$ does not matter, (2) events $a$ and $b$ should
occur at the same time, (3) $b$ should occur after $a$, and (4) $b$ should
occur after or at the same time as $a$. We extend the definition of an episode
to handle all these cases. In further text, we consider events simultaneous only if they
occur exactly at the same time. However, we can easily adjust
our framework to consider events simultaneous if they occur within
a chosen time interval.

Secondly, a standard approach for representing a partial order of the events
is by using a directed acyclic graph (DAG). The mining algorithm would then
discover episodes by adding nodes and edges. However, we point out that such a
representation has drawbacks. One episode may be represented by several graphs
and the subset relationship based on the graphs is not optimal. This ultimately
leads to outputting redundant patterns. We will address this problem.

Thirdly, we attack the problem of pattern explosion by using closed patterns.
There are two particular challenges with closed episodes. Firstly, we point out
that we cannot define a unique closure for an
episode, that is, an episode may have several maximal episodes with the same
frequency.  Secondly, the definition of a closure requires a subset
relationship, and  computing the subset relationship between episodes is \np-hard.

We mine patterns using a depth-first search. An episode is represented by a
DAG and we explore the patterns  by adding nodes and edges. To reduce the
search space we use the \emph{instance-closure} of episodes. While it is not
guaranteed that an instance-closed episode is actually closed, using such
episodes will greatly trim the pattern space. Finally, the actual filtering for
closed episodes is done in a post-processing step. We introduce techniques for
computing the subset relationship, distinguishing the cases where we can do a
simple test from the cases where we have to resort to recursive enumeration.
This filtering will remove all redundancies resulting from using DAGs for
representing episodes.

The rest of the paper is organised as follows: In Section~\ref{sec:related}, we discuss the most relevant related work. In Section~\ref{sec:episodes},
we present the main notations and concepts. Section~\ref{sec:subset} introduces
the notion of closure in the context of episodes.  Our algorithm is presented
in detail in Sections~\ref{sec:instance},~\ref{sec:algorithm}
and~\ref{sec:compute}. In Section~\ref{sec:exp} we present the results of our
experiments, before presenting our conclusions in
Section~\ref{sec:conclusions}. The proofs of the theorems can be found in the Appendix
and the code of the algorithm is available online\footnote{\url{http://adrem.ua.ac.be/implementations}}.

\section{Related Work}
\label{sec:related}
The first attempt at discovering frequent subsequences, or serial
episodes, was made by Wang et al.~\cite{wang:94:combinatorial}. The dataset
consisted of a number of sequences, and a pattern was considered interesting if
it was long enough and could be found in a sufficient number of sequences. A
complete solution to a more general problem was later provided by Agrawal and
Srikant~\cite{agrawal:95:mining} using an \textsc{Apriori}-style
algorithm~\cite{agrawal:94:fast}.

Looking for frequent general episodes in a single event sequence was first
proposed by Mannila et al.~\cite{mannila:97:discovery}. The \textsc{Winepi}
algorithm finds all episodes that occur in a sufficient number of windows of
fixed length.  Specific algorithms were given for the case of parallel and
serial episodes.  However, no algorithm for detecting general episodes 
was provided.

Some research has gone into outputting only closed subsequences, where a
sequence is considered closed if it is not properly contained in any other
sequence which has the same frequency. Yan et al.~\cite{yan:03:clospan},
Tzvetkov et al.~\cite{tzvetkov:03:mining}, and Wang and Han~\cite{wang:04:bide}
proposed methods for mining such closed patterns, while
Garriga~\cite{garriga:05:summarizing} further reduced the output by
post-processing it and representing the patterns using partial orders.\footnote{Despite their name, the partial
orders discovered by Garriga are different from general episodes.} Harms et al.~\cite{harms:01:discovering}, meanwhile, experiment with closed serial episodes.
In another attempt to trim the output, Garofalakis et al.~\cite{garofalakis:02:mining}
proposed a family of algorithms called \textsc{Spirit} which allow the user to
define regular expressions that specify the language that the discovered patterns must
belong to.

Pei et al.~\cite{pei:06:discovering}, and Tatti and Cule~\cite{tatti:10:mining}
considered restricted versions of our problem setup. The former approach
assumes a dataset of sequences where the same label can occur only once.
Hence, an episode can contain only unique labels. The latter pointed
out the problem of defining a proper subset relationship between general episodes and tackled it by
considering only episodes where two nodes having the same label had to be
connected. In our work, we impose no restrictions on the labels of events making up the episodes.

In this paper we use frequency based on a sliding window as it
is defined for \textsc{Winepi}. However, we can easily adopt our approach for
other monotonically decreasing measures, as well as to a setup where the data
consists of many (short) sequences instead of a single long one. Mannila et al. propose
\textsc{Minepi}~\cite{mannila:97:discovery}, an alternative interestingness measure for an episode, where
the support is defined as the number of minimal windows. Unfortunately, this
measure is not monotonically decreasing. However,
the issue can be fixed by defining support as the maximal number of non-overlapping minimal
windows~\cite{tatti:09:significance,laxman:07:fast}.  Zhou et
al.~\cite{zhou:10:mining} proposed mining closed serial episodes based on the
\textsc{Minepi} method. However, the paper did not address the non-monotonicity
issue of \textsc{Minepi}.

Alternative interestingness measures, either statistically motivated or aimed
to remove bias towards smaller episodes, were made by
Garriga~\cite{casas-garriga:03:discovering}, M\'eger and Rigotti~\cite{meger:04:constraint-based}, Gwadera et
al.~\cite{gwadera:05:reliable,gwadera:05:markov}, Calders et
al.~\cite{calders:07:mining}, Cule et al.~\cite{cule:09:new}, and
Tatti~\cite{tatti:09:significance}.

Using episodes to discover simultaneous events has, to our knowledge, not been
done yet. However, this work is somewhat related to efforts made in discovering
sequential patterns in multiple streams~\cite{oates:96:searching,
chen:05:sequential, gwadera:08:discovering}.  Here, it is possible to discover
a pattern wherein two events occur simultaneously, as long as they occur in
separate streams.

\section{Episodes with Simultaneous\\ Events}
\label{sec:episodes}

\tikzstyle{node} = [inner sep = 1pt]
\tikzstyle{labnode} = [inner sep = 1pt]
\tikzstyle{wedge} = [draw, -latex', thick,green!60!black!100, densely dashed]
\tikzstyle{pedge} = [draw, -latex', thick,blue!70]

We begin this section by introducing the basic concepts that we will use throughout
the paper. First we will describe our dataset.

\begin{definition}
A \emph{sequence event} $e = (\id{e}, \lab{e}, \ts{e})$ is a tuple consisting of three
entries, a unique id number $\id{e}$, a label $\lab{e}$, and a time stamp integer
$\ts{e}$. We will assume that if $\id{e} > \id{f}$, then $\ts{e} \geq \ts{f}$.
A \emph{sequence} is a collection of sequence events ordered by
their ids.
\end{definition}

Note that we are allowing multiple events to have the same time stamp even when
their labels are equivalent. For the sake of simplicity, we will use the
notation $s_1\cdots s_N$ to mean a sequence $\enpr{(1, s_1, 1)}{(N, s_N, N)}$.
We will also write $s_1\cdots(s_is_{i + 1})\cdots s_N$ to mean the sequence
\[
	((1, s_1, 1), \ldots, (i, s_i, i), (i + 1, s_{i + 1}, i), \ldots, (N, s_N, N - 1)).
\]
This means that $s_i$ and $s_{i + 1}$ have equal time stamps.

Our next step is to define patterns we are interested in.

\begin{definition}
An \emph{episode event} $e$ is a tuple consisting of two entries, a unique id
number $\id{e}$ and a label $\lab{e}$. An \emph{episode graph} $G$ is a directed
acyclic graph (DAG). The graph may have two types of edges:
\emph{weak edges} $\we{G}$ and \emph{proper edges} $\pe{G}$.

An \emph{episode} consists of a collection of episode events, an episode graph,
and a surjective mapping from episode events to the nodes of the graph which we
will denote by $\node{e}$.  A proper edge from node $v$ to node $w$ in the episode graph implies that the events of $v$ must occur before the events of $w$, while a weak edge from $v$ to $w$ implies that the events of $w$ may occur either at the same time as those of $v$ or later.

We will assume that the nodes of $G$ are indexed and we will use the notation
$\node{G, i}$ to refer to the $i$th node in $G$.
\end{definition}

When there is no danger of confusion, we will use the same letter to denote an
episode and its graph. Note that we are allowing multiple episode events to
share the same node even if these events have the same labels.

\begin{definition}
Given an episode $G$ and a node $n$, we define 
$\lab{n} = \set{\lab{e} \mid \node{e} = n}$ to be the multiset of labels associated with the node.
Given two multisets of labels $X$ and $Y$ we write $X \lexleq Y$ if $X$ is
lexicographically smaller than or equal to $Y$.
We also define $\lab{G}$ to be the multiset of all labels in $G$.
\end{definition}

\begin{definition}
A node $n$ in an episode graph is a \emph{descendant} of a node $m$ if there
is a path from $m$ to $n$. If there is a path containing a proper edge we will call $n$
a \emph{proper descendant} of $m$. We similarly define a \emph{(proper) ancestor}.
A node $n$ is a \emph{source} if it has no ancestors. A node $n$ is a \emph{proper source}
if it has no proper ancestors. We denote all sources of an episode $G$ by $\sources{G}$.
\end{definition}

We are now ready to give a precise definition of an occurrence of a pattern in a sequence.

\begin{definition}
Given a sequence $s$ and an episode $G$, we say that $s$ covers $G$ if there
exists an \emph{injective} mapping $m$ from the episode events to the sequence events such that  
\begin{enumerate*}
\item labels are respected, $\lab{m(e)} = \lab{e}$,
\item events sharing a same node map to events with the same time stamp, in other words,
$\node{e} = \node{f}$ implies $\ts{m(e)} = \ts{m(f)}$,
\item weak edges are respected, if $\node{e}$ is a descendant of $\node{f}$, then
$\ts{m(e)} \geq \ts{m(f)}$,
\item proper edges are respected, if $\node{e}$ is a proper descendant of $\node{f}$, then
$\ts{m(e)} > \ts{m(f)}$.
\end{enumerate*}
Note that this definition allows us to abuse notation and map graph nodes
directly to time stamps, that is, given a graph node $n$ we define $\ts{m(n)} =
\ts{m(e)}$, where $\node{e} = n$.
\end{definition}

Consider the first three episodes in Figure~\ref{fig:toy}.  A sequence $ab(cd)$
covers $G_1$ and $G_3$ but not $G_2$ (proper edge $(c, d)$ is violated).  A
sequence $(ab)cd$ covers $G_1$ and $G_2$ but not $G_3$. 

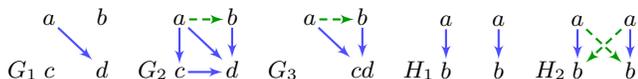
\begin{figure}[htb!]
\centering

\begin{tikzpicture}
\node[node, anchor = base] (n1) {$a$};
\node[node, right = 0.7cm of n1.base, anchor = base] (n2) {$b$};
\node[node, below = 0.7cm of n1.base, anchor = base] (n3) {$c$};
\node[node, right = 0.7cm of n3.base, anchor = base] (n4) {$d$};
\node[labnode, left = 0cm of n3.base west, anchor = base east] (lab) {$G_1$};
\path [pedge] (n1) -- (n4);
\end{tikzpicture}\hfill
\begin{tikzpicture}
\node[node, anchor = base] (n1) {$a$};
\node[node, right = 0.7cm of n1.base, anchor = base] (n2) {$b$};
\node[node, below = 0.7cm of n1.base, anchor = base] (n3) {$c$};
\node[node, right = 0.7cm of n3.base, anchor = base] (n4) {$d$};
\node[labnode, left = 0cm of n3.base west, anchor = base east] (lab) {$G_2$};
\path [pedge] (n1) -- (n4);
\path [pedge] (n1) -- (n3);
\path [wedge] (n1.mid east) -- (n2.mid west);
\path [pedge] (n3.mid east) -- (n4.mid west);
\path [pedge] (n2) -- (n4);
\end{tikzpicture}\hfill
\begin{tikzpicture}
\node[node, anchor = base] (n1) {$a$};
\node[node, right = 0.7cm of n1.base, anchor = base] (n2) {$b$};
\node[node, right = 0.7cm of n3.base, anchor = base] (n4) {$cd$};
\node[labnode, left = 0cm of n3.base west, anchor = base east] (lab) {$G_3$};
\path [pedge] (n1) -- (n4);
\path [wedge] (n1.mid east) -- (n2.mid west);
\path [pedge] (n2) -- (n4);
\end{tikzpicture}\hfill
\begin{tikzpicture}
\node[node, anchor = base] (n1) {$a$};
\node[node, right = 0.7cm of n1.base, anchor = base] (n2) {$a$};
\node[node, below = 0.7cm of n1.base, anchor = base] (n3) {$b$};
\node[node, right = 0.7cm of n3.base, anchor = base] (n4) {$b$};
\node[labnode, left = 0cm of n3] (lab) {$H_1$};
\path [pedge] (n1) -- (n3);
\path [pedge] (n2) -- (n4);
\end{tikzpicture}\hfill
\begin{tikzpicture}
\node[node, anchor = base] (n1) {$a$};
\node[node, right = 0.7cm of n1.base, anchor = base] (n2) {$a$};
\node[node, below = 0.7cm of n1.base, anchor = base] (n3) {$b$};
\node[node, right = 0.7cm of n3.base, anchor = base] (n4) {$b$};
\node[labnode, left = 0cm of n3] (lab) {$H_2$};
\path [pedge] (n1) -- (n3);
\path [pedge] (n2) -- (n4);
\path [wedge] (n2) -- (n3);
\path [wedge] (n1) -- (n4);
\end{tikzpicture}\hfill
\caption{Toy episodes. Proper edges are drawn solid. Weak edges are drawn dashed.}
\label{fig:toy}
\end{figure}

Finally, we are ready to define support of an episode based on fixed windows.
This definition corresponds to the definition used in
\textsc{Winepi}~\cite{mannila:97:discovery}. The support is monotonically decreasing which
allows us to do effective pruning while discovering frequent episodes.

\begin{definition}
Given a sequence $s$ and two integers $i$ and $j$ we define a \emph{subsequence}
\[
	s[i, j] = \set{e \in s \mid i \leq \ts{e} \leq  j}
\]
containing all events occurring between $i$ and $j$.
\end{definition}

\begin{definition}
Given a window size
$\rho$ and an episode $s$, we define the \emph{support} of an episode $G$ in $s$, denoted $\freq{G; s}$, to be the number of windows
of size $\rho$ in $s$ covering the episode,
\[
	\freq{G; s} = \abs{\set{s[i, i + \rho - 1] \mid s[i, i + \rho - 1] \text{ covers } G}}.
\]
We will use $\freq{G}$ whenever $s$ is clear from the context.
An episode is $\sigma$-\emph{frequent} (or simply \emph{frequent}) if its support is higher or equal than some given threshold $\sigma$.
\end{definition}

Consider a sequence $abcdacbd$ and set the window size $\rho = 4$. There are
$2$ windows covering episode $G_1$ (given in Figure~\ref{fig:toy}), namely
$s[1, 4]$ and $s[5, 8]$. Hence $\freq{G_1} = 2$.

\begin{theorem}
\label{thr:npcomplete}
Testing whether a sequence $s$ covers an episode $G$ is an \np-complete problem,
even if $s$ does not contain simultaneous events.
\end{theorem}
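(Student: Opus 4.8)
The plan is to establish membership in \np\ and then prove \np-hardness by a polynomial reduction from 3-SAT, working entirely within the simultaneity-free case so as to justify the final clause of the statement. Membership is immediate: a certificate is the mapping $m$, and given $m$ one checks in polynomial time that it is injective, that $\lab{m(e)} = \lab{e}$ for every episode event, and that the four timestamp conditions hold by inspecting, for each (proper) ancestor/descendant pair of nodes, the corresponding timestamps. Before the reduction I would record a simplification. If $s$ has distinct timestamps, any injective $m$ sends distinct events to distinct timestamps, so two events on a common node can never be mapped (condition~2 would force equal timestamps); hence the only episodes a simultaneity-free sequence can cover have a single event per node, and along every edge both weak and proper edges collapse to the strict requirement $\ts{m(n)} < \ts{m(n')}$. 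Over such sequences, covering is therefore exactly the question of whether a labelled DAG $G$ admits an injective, label-respecting embedding of its nodes into the positions of the string $s$ that is increasing along every edge. I will build a hard instance of this restricted form, using only proper edges and single-event nodes.

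For the reduction I would take a 3-SAT instance with variables $x_1,\dots,x_n$ and clauses $C_1,\dots,C_m$ and use two gadgets. A \emph{variable gadget} for $x_i$ uses a label occurring exactly twice in $s$, at a ``true'' position and a ``false'' position, together with an episode node that must occupy one of the two: this encodes the binary choice of a truth value. A \emph{clause gadget} for $C_j$ provides, inside a dedicated time-region of $s$, one candidate position per literal and an episode node that must land on one of them; proper edges between variable nodes and the clause node act as \emph{guards}, arranged so that the position belonging to a literal is available exactly when that literal is true under the chosen assignment, so that $C_j$ is coverable iff at least one literal is satisfied. Global consistency — a variable receiving the same value in all of its clauses — I would enforce by threading each variable through the clause regions along a monotone chain of proper edges, whose increasing placement forces the true/false choice to agree across regions.

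The correctness argument then has the usual two directions. From a satisfying assignment one reads off positions for every node (the truth choices in the variable gadgets, and a satisfied literal's position for each clause node) and verifies that all proper edges are respected; conversely, any covering determines a consistent assignment satisfying every clause. Since $\abs{s}$ and the size of $G$ are polynomial in $n + m$, the transformation is polynomial, which together with membership completes the proof.

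I expect the clause gadget to be the main obstacle. An order edge is a \emph{global} constraint on the clause node, so naively attaching one guard per literal produces a conjunction (all literals forced) rather than the disjunction a clause needs; the real work is choosing the timestamps so that the guards do not interfere — each closes only ``its own'' candidate position — while the monotone chains simultaneously propagate a single consistent value and leave precisely the intended positions free. Making this layout explicit, and ruling out spurious embeddings that cover $G$ while cheating on the assignment, is the delicate (though ultimately routine) part of the argument.
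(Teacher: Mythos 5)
Your NP-membership argument and your reduction of the simultaneity-free case to ``injective, label-respecting, edge-increasing embedding of a DAG into a string'' are both correct, and you have put your finger on exactly the right difficulty: a proper edge into a clause node is a conjunctive constraint, whereas a clause needs a disjunction over its literals. The problem is that you then stop there. With a \emph{single} clause node carrying incoming guard edges from the three literal nodes, all three guards must hold simultaneously no matter which candidate position the node lands on --- no choice of timestamps can make an edge constraint ``apply only to the position the node happens to choose.'' So the gadget as described cannot be repaired by careful layout alone; it needs a different structural idea, and that idea is precisely what the proof has to supply. As written, the proposal defers the one step that constitutes the actual content of the hardness argument.

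The paper's resolution is worth contrasting with your sketch. It gives each clause $C_j$ \emph{three} episode nodes, all sharing one label $\beta_j$, each guarded by a single edge from the node representing its own literal ($p_i$ for a positive occurrence, $n_i$ for a negated one --- note this requires \emph{two} nodes per variable with the same label $\alpha_i$, not one node and two positions as in your variable gadget). The sequence is $\alpha_1\cdots\alpha_M\,\beta_1\cdots\beta_L\,\alpha_1\cdots\alpha_M\,\beta_1\cdots\beta_L\,\beta_1\cdots\beta_L$, which has exactly as many events as the episode has nodes, so any covering map is forced to be surjective. Surjectivity does all the work: for each variable, one of $\{p_i,n_i\}$ must occupy the early $\alpha_i$ and the other the late one (encoding the assignment), and for each clause exactly one of its three nodes must occupy the single ``early'' $\beta_j$ slot sandwiched between the two variable blocks --- and that node's guard then forces its literal's parent into the first block, i.e.\ forces that literal true, while the other two clause nodes escape to the trailing blocks where their guards are vacuously satisfiable. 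This counting trick is how the conjunction-of-edges is turned into a disjunction-over-literals, and it is the missing piece of your argument.
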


\section{Subepisode Relationship}
\label{sec:subset}
In practice, episodes are represented by DAGs and are mined by adding nodes and
edges. However, such a representation has drawbacks~\cite{tatti:10:mining}. To see
this, consider episodes $H_1$ and $H_2$ in Figure~\ref{fig:toy}. Even though
these episodes have different graphs, they are essentially the same --- both
episodes are covered by exactly the same sequences, namely all sequences
containing $abab$, $a(ab)b$, $(aa)(bb)$, $aa(bb)$, $(aa)bb$, or $aabb$. In other words, essentially the same episode may
be represented by several graphs. Moreover, using the graph subset relationship
to determine subset relationships between episodes will ultimately lead to
less efficient algorithms and redundancy in the final output. To counter these
problems we introduce a subset relationship based on coverage.

\begin{definition}
Given two episodes $G$ and $H$, we say that $G$ is a \emph{subepisode} of $H$, denoted
$G \preceq H$, if any sequence covering $H$ also covers $G$. If
$G \preceq H$ and $H \preceq G$, we say that $G$ and $H$ are \emph{similar}
in which case we will write $G \sim H$.
\end{definition}

This definition gives us the optimal definition for a subset
relationship in the following sense: if $G \npreceq H$, then there exists a sequence $s$ such
that $\freq{G; s} < \freq{H; s}$.

Consider the episodes given in Figure~\ref{fig:toy}. It follows from the definition that $H_1 \sim H_2$, $G_1
\preceq G_2$, and $G_1 \preceq G_3$. Episodes $G_2$ and $G_3$ are not comparable.

\begin{theorem}
\label{thr:hardness}
Testing $G \preceq H$ is an \np-hard problem.
\end{theorem}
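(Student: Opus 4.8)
The plan is to reduce the coverage problem to the subepisode problem, invoking Theorem~\ref{thr:npcomplete}. Given a sequence $s$, I would first construct in polynomial time an episode $H_s$ that faithfully encodes $s$: introduce one episode event for each sequence event of $s$ (keeping its label), place episode events on a common node exactly when the corresponding sequence events share a time stamp, and connect the resulting timestamp-nodes into a chain by proper edges, so that $\pe{H_s}$ linearly orders the nodes by increasing time stamp. When $s$ contains no simultaneous events this simply produces a serial episode, which suffices since Theorem~\ref{thr:npcomplete} already holds for such $s$. The claim to establish is that, for every episode $G$, we have $G \preceq H_s$ if and only if $s$ covers $G$; since $H_s$ is built in polynomial time from $s$, this reduction shows that deciding $G \preceq H_s$ is \np-hard.

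The first direction is immediate. The mapping sending each episode event of $H_s$ back to the sequence event of $s$ it was built from is injective and respects labels, simultaneity, and the proper chain, so $s$ covers $H_s$. Consequently, if $G \preceq H_s$, then $s$, being a sequence that covers $H_s$, must also cover $G$.

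The substantive direction is the converse, and this is where the main effort lies. Assuming $s$ covers $G$ via an injective mapping $m'$ into the events of $s$, I would show that every sequence $t$ covering $H_s$ also covers $G$, which is exactly $G \preceq H_s$. Let $m$ witness that $t$ covers $H_s$. Because $H_s$ was constructed so that same-node events are precisely the simultaneous events of $s$ and the proper chain reproduces the strict time order of $s$, the mapping $m$ embeds the structure of $s$ into $t$ while preserving equal time stamps and strict precedence. I would then compose $m$ with $m'$ (after identifying the events of $H_s$ with those of $s$) to obtain an injective mapping from the events of $G$ into those of $t$, and verify that each covering condition is inherited: labels are preserved by both maps; two $G$-events on a common node are sent by $m'$ to simultaneous $s$-events, hence to a common node of $H_s$, hence by $m$ to equal time stamps in $t$; and proper (respectively weak) descendant relations in $G$ force strict (respectively non-strict) time-stamp inequalities under $m'$, which the chain structure of $H_s$ converts into proper (respectively weak) constraints that $m$ then preserves. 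Thus $t$ covers $G$.

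The key obstacle is precisely this transitivity-of-coverage step: one must argue that $H_s$ is a faithful encoding of $s$, in the sense that any mapping realizing coverage of $H_s$ simultaneously realizes an order- and simultaneity-respecting embedding of $s$, so that the two injective mappings compose into a valid witness for $G$. Once this equivalence between $s$ and $H_s$ is established and the composition is checked condition by condition, the reduction is complete and \np-hardness follows from Theorem~\ref{thr:npcomplete}.
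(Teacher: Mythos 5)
Your proposal is correct and follows essentially the same route as the paper: the paper also reduces coverage to subepisode testing by representing the sequence $s$ as a serial episode $H$ and observing that $s$ covers $G$ if and only if $G \preceq H$, invoking Theorem~\ref{thr:npcomplete}. You simply spell out the composition-of-mappings argument behind the ``if and only if'' that the paper states as immediate.
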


\begin{proof}
The hardness follows immediately from Theorem~\ref{thr:npcomplete} as we can
represent sequence $s$ as a serial episode $H$. Then
$s$ covers $G$ if and only if $G \preceq H$.
\end{proof}

As mentioned in the introduction, pattern explosion is \emph{the} problem with discovering
general episodes. We tackle this by mining only closed episodes.

\begin{definition}
An episode $G$ is closed if there is no $H \succ G$ with $\freq{G} =
\freq{H}$.
\end{definition}

We should point out that, unlike with itemsets, an episode may have several
maximal superepisodes having the same frequency. Consider $G_1$, $G_3$, and
$G_4$ in Figure~\ref{fig:closure}, sequence $abcbdacbcd$ and window size
$\rho = 5$.  The support of episodes $G_1$, $G_3$ and $G_4$ is $2$. Moreover,
there is no superepisode of $G_3$ or $G_4$ that has the same support.
Hence, $G_3$ and $G_4$ are both maximal superepisodes having the same support
as $G_1$. This implies that we cannot define a closure operator based on
frequency. However, we will see in the next section that we can define a
closure based on instances. This closure, while not removing all redundant
episodes, will prune the search space dramatically. The final pruning will then be
done in a post-processing step.

\begin{figure}[htb!]
\centering

\begin{tikzpicture}
\node[node, anchor = base] (n1) {$a$};
\node[node, right = 0.7cm of n1.base, anchor = base] (n2) {$b$};
\node[node, below = 0.7cm of n1.base, anchor = base] (n3) {$c$};
\node[node, right = 0.7cm of n3.base, anchor = base] (n4) {$d$};
\node[labnode, left = 0cm of n3.base west, anchor = base east] (lab) {$G_1$};
\path [pedge] (n1) -- (n4);
\end{tikzpicture}\hfill
\begin{tikzpicture}
\node[node, anchor = base] (n1) {$a$};
\node[node, right = 0.7cm of n1.base, anchor = base] (n2) {$b$};
\node[node, below = 0.7cm of n1.base, anchor = base] (n3) {$c$};
\node[node, right = 0.7cm of n3.base, anchor = base] (n4) {$d$};
\node[labnode, left = 0cm of n3.base west, anchor = base east] (lab) {$G_2$};
\path [pedge] (n1) -- (n4);
\path [pedge] (n1) -- (n3);
\path [pedge] (n1.mid east) -- (n2.mid west);
\path [pedge] (n3.mid east) -- (n4.mid west);
\path [pedge] (n2) -- (n4);
\end{tikzpicture}\hfill
\begin{tikzpicture}
\node[node, anchor = base] (n1) {$a$};
\node[node, right = 0.7cm of n1.base, anchor = base] (n2) {$b$};
\node[node, below = 0.7cm of n1.base, anchor = base] (n3) {$c$};
\node[node, right = 0.7cm of n3.base, anchor = base] (n4) {$d$};
\node[labnode, left = 0cm of n3.base west, anchor = base east] (lab) {$G_3$};
\path [pedge] (n1) -- (n4);
\path [pedge] (n2) -- (n3);
\path [pedge] (n1) -- (n3);
\path [pedge] (n1.mid east) -- (n2.mid west);
\path [pedge] (n3.mid east) -- (n4.mid west);
\path [pedge] (n2) -- (n4);
\end{tikzpicture}\hfill
\begin{tikzpicture}
\node[node, anchor = base] (n1) {$a$};
\node[node, right = 0.7cm of n1.base, anchor = base] (n2) {$b$};
\node[node, below = 0.7cm of n1.base, anchor = base] (n3) {$c$};
\node[node, right = 0.7cm of n3.base, anchor = base] (n4) {$d$};
\node[labnode, left = 0cm of n3.base west, anchor = base east] (lab) {$G_4$};
\path [pedge] (n1) -- (n4);
\path [pedge] (n3) -- (n2);
\path [pedge] (n1) -- (n3);
\path [pedge] (n1.mid east) -- (n2.mid west);
\path [pedge] (n3.mid east) -- (n4.mid west);
\path [pedge] (n2) -- (n4);
\end{tikzpicture}\hfill

\caption{Toy episodes demonstrating closure.} 
\label{fig:closure}
\end{figure}
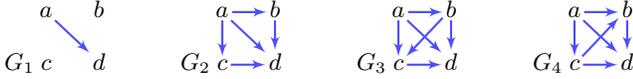

Our final step is to define transitively closed episodes that we will use along
with the instance-closure (defined in the next section) in order to reduce the
pattern space. 

\begin{definition}
Let $G$ be an episode. A \emph{transitive closure}, $\tcl{G}$, is obtained by
adding edges from a node to each of its descendants making the edge proper
if the descendant is proper, and weak otherwise. If $G = \tcl{G}$ we say that
$G$ is transitively closed.
\end{definition}

It is trivial to see that given an episode $G$, we have $G \sim \tcl{G}$.
Thus we can safely ignore all episodes that are not transitively closed.  From
now on, unless we state otherwise, all episodes are assumed to be transitively
closed.

\section{Handling Episode Instances}
\label{sec:instance}
The reason why depth-first search is efficient for itemsets is that at each
step we only need to handle the current projected dataset. In our setup we
have only one sequence so we need to transport the sequence into a more
efficient structure.

\begin{definition}
Given an input sequence $s$ and an episode $G$, an \emph{instance} $i$ is a
valid mapping from $G$ to $s$ such that for each $e \in \range{i}$ there is no
$f \in s - \range{i}$ such that $\lab{e} = \lab{f}$, $\ts{e} = \ts{f}$ and
$\id{f} < \id{e}$.
We define $\first{i} = \min \ts{i(n)}$ to
be the smallest time stamp and $\last{i} = \max \ts{i(n)}$ to be the largest
time stamp in $i$.  We require that $\last{i} - \first{i} \leq \rho - 1$, where $\rho$ is the size of the sliding window.
An \emph{instance set} of an episode $G$, defined as $\inst{G}$ is a set of
all instances ordered by $\first{i}$.  
\end{definition}

The condition in the definition allows us to ignore some redundant mappings
whenever we have two sequence events, say $e$ and $f$, with $\lab{e} = \lab{f}$
and $\ts{e} = \ts{f}$. If an instance $i$ uses only $e$, then we can obtain
$i'$ from $i$ by replacing $e$ with $f$. However, $i$ and $i'$ are essentially
the same for our purposes, so we can ignore either $i$ or $i'$.  We require
the instance set to be ordered so that we can compute the support efficiently.
This order is not necessarily unique.

Consider sequence $abcbdacbcd$ and $G_1$ in Figure~\ref{fig:closure}.
Then
$\inst{G_1} = ((1, 2, 3, 5), (1, 4, 3, 5), (6, 7, 8, 10), (6, 9, 8, 10))$\footnote{For simplicity, we write mappings as tuples}.

Using instances gives us several advantages. Adding new events and
edges to episodes becomes easy. For example, adding a proper edge $(n, m)$ is equivalent to
keeping instances with $\ts{i(n)} < \ts{i(m)}$. We will also compute support
and closure efficiently. We should point out that $\inst{G}$ may contain
an exponential number of instances, otherwise Theorem~\ref{thr:npcomplete}
would imply that $\p = \np$. However, this is not a problem in practice.

The depth-first search described in Section~\ref{sec:algorithm} adds events to
the episodes. Our next goal is to define algorithms for computing the resulting
instance set whenever we add an episode event, say $f$, to an episode $G$. Given
an instance $i$ of $G$ and a sequence event $e$ we will write $i + e$ to mean an
expanded instance by setting $i(f) = e$.

Let $G$ be an episode and let $I = \inst{G}$ be the instance set.
Assume a node $n \in V(G)$ and a label $l$. Let $H$ be the episode obtained
from $G$ by adding an episode event with  label $l$ to node $n$.
We can compute $\inst{H}$ from $\inst{G}$ using the \textsc{AugmentEqual}
algorithm given in Alg.~\ref{alg:augmentequal}.

\begin{algorithm}[htb!]
	\Input{$I = \inst{G}$, a node $n$, a label $l$}
	\Out{$\inst{\text{$G$ with $n$ augmented with $l$}}$}
	\Return $\set{i + e \mid \begin{array}{l} i \in I , e \in s, \lab{e} = l, \\ \ts{n} = \ts{e}, i + e \text{ is an instance} \end{array}}$\;
\caption{\textsc{AugmentEqual}$(I, n, l)$, augments $I$}
\label{alg:augmentequal}
\end{algorithm}

The second augmentation algorithm deals with the case where we are adding a new node with
an single event labelled as $l$ to a parallel episode. Algorithm \textsc{Augment}, given in Alg.~\ref{alg:augment}, computes the new
instance set. The algorithm can be further optimised by doing augmentation
with a type of merge sort so that post-sorting is not needed.

\begin{algorithm}[htb!]
\Input{$I = \inst{G}$, label $l$ of the new event}
\Out{$\inst{\text{$G$ with a new node labelled with $l$}}$}
	$E \define \set{e \in s \mid \lab{e} = l}$\; 
	$J \define \set{i + e \mid i \in I, e \in E, i + e \text{ is an instance}}$\;
	Sort $J$ by $\first{i}$\;
	\Return $J$\;
\caption{\textsc{Augment}$(I, l)$, augments instances}
\label{alg:augment}
\end{algorithm}

Our next step is to compute the support of an episode $G$ from $\inst{G}$.  We do
this with the \textsc{Support} algorithm, given in Alg.~\ref{alg:support}. The algorithm is based on the observation that there
are $\rho - (\last{i} - \first{i})$ windows that contain the instance $i$.
However, some windows may contain more than one instance and we need to 
compensate for this.

\begin{algorithm}[htb!]
	\Input{$I = \inst{G}$}
	\Out{$\freq{G}$}
	$J \define \emptyset$; $l \define \infty$\;
	\ForEach {$i \in I$ in reverse order} {
		\If {$\last{i} < l$} {
			Add $i$ to $J$\;
			$l \define \last{i}$\;
		}
	}
	$l \define -\infty$; $f \define 0$\;
	\ForEach {$i \in J$} {
		$d \define \rho - (\last{i} - \first{i})$\;
		$a \define 1 + \last{i} - \rho$\;
		$d \define d - \max(0, 1 + l - a)$\;\nllabel{alg:support:sum}
		$f \define f + d$\;
		$l \define \first{i}$\;
	}
	\Return $f$\;
\caption{\textsc{Support}$(I)$, computes support}
\label{alg:support}
\end{algorithm}

\begin{theorem}
$\textsc{Support}(\inst{G})$ computes $\freq{G}$.
\end{theorem}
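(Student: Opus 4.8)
The plan is to reinterpret $\freq{G}$ as the size of a union of integer intervals and then show that \textsc{Support} computes that size by a single sweep. First I would record the elementary \emph{coverage fact}: a window $s[a, a + \rho - 1]$ covers $G$ if and only if it contains some instance $i \in \inst{G}$, i.e.\ $a \leq \first{i}$ and $\last{i} \leq a + \rho - 1$. The direction ``contains an instance $\Rightarrow$ covers'' is immediate, since an instance is itself a valid mapping. For the converse, any valid mapping of $G$ into the window can be turned into an instance by reassigning, for each label and time stamp, the used events to the smallest available ids among the sequence events of that label and time stamp; this preserves injectivity and all time stamps, so the resulting instance still lies inside the same window and meets the redundancy condition. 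Consequently, writing $A_i = \set{a \mid \last{i} - \rho + 1 \leq a \leq \first{i}}$ for the set of starting positions of windows containing $i$, we get
\[
\freq{G} = \abs{\bigcup_{i \in \inst{G}} A_i}, \qquad \abs{A_i} = \rho - (\last{i} - \first{i}).
\]

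Next I would argue that the first loop (the reverse-order scan that builds $J$) discards only instances whose interval is already covered, so that $\bigcup_{i \in J} A_i = \bigcup_{i \in \inst{G}} A_i$. Scanning $I$ in order of decreasing $\first{i}$, the variable $l$ equals the smallest $\last{}$ among the instances kept so far, and it strictly decreases at each keep. Hence, reading $J$ in order of increasing $\first{i}$, the values $\last{i}$, and therefore the left endpoints $\last{i} - \rho + 1$, are strictly increasing: the intervals $A_i$ form a left-to-right ``staircase''. An instance $j$ that is discarded satisfies $\last{j} \geq l = \last{i}$ for the most recently kept $i$, which has $\first{i} \geq \first{j}$; this gives $A_j \subseteq A_i$, so dropping $j$ leaves the union unchanged.

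Finally I would verify that the second loop adds up the staircase union by combining the lengths of consecutive intervals. Processing the retained instances in order of increasing $\first{i}$, I claim the quantity added for $i$ equals $\abs{A_i} - \abs{A_i \cap \bigcup_{\text{earlier } j} A_j}$. Because the intervals form a staircase, every earlier interval ends at or before $l = \first{\text{prev}}$, and the immediate predecessor already covers the whole overlap region $[\last{i} - \rho + 1, l]$; hence the overlap with the entire earlier union equals the overlap with the single predecessor, namely $\max\fpr{0, l - \last{i} + \rho}$. This is exactly the term $\max\fpr{0, 1 + l - a}$ subtracted from $d$ in line~\ref{alg:support:sum}, with $a = \last{i} - \rho + 1$. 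Summing these disjoint new contributions telescopes to $\abs{\bigcup_{i} A_i} = \freq{G}$, the first instance contributing its full length since $l = -\infty$ initially.

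The main obstacle is this last step: making precise that, in the staircase arrangement, the part of $A_i$ already covered by \emph{all} previously processed intervals coincides with its overlap with the single immediately preceding one. This is where the strict monotonicity of both endpoints, established from the filtering loop, is essential; without it the single-predecessor subtraction could under- or over-count, and one would be forced to subtract overlaps with several earlier intervals.
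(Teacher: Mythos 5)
Your proof is correct and follows essentially the same route as the paper's: prune instances dominated by another instance with larger left endpoint and smaller right endpoint, then sweep the resulting staircase of window-start intervals, subtracting only the overlap with the immediate predecessor. The one substantive addition is your explicit justification that $\freq{G}$ equals the number of windows containing some instance despite the id-based redundancy condition in the definition of an instance --- a step the paper's proof takes for granted --- but this does not change the overall argument.
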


\begin{proof}
Since $I$ is ordered by $\first{i}$, the first for loop of the algorithm
removes any instance for which there is an instance $j$ such that $\first{i}
\leq \first{j} \leq \last{j} \leq \last{i}$. In other words, any window
that contains $i$ will also contain $j$. We will show that the next for loop
counts the number of windows containing at least one instance from $W$, this will imply
the theorem.

To that end, let $i_n \in J$ be the $n$th instance in $J$ and define $S_n$ to
be the set of windows of size $\rho$ containing $i_n$. It follows that
$\abs{S_n} = \rho - (\last{i_n} - \first{i_n})$ and that the first window of
$S_n$ starts at $a_n = 1 + \last{i_n} - \rho$.
 Let $C_n = \bigcup_{m = 1}^n S_m$.
  Note that because of the pruning we have $a_{n - 1} < a_n$, this implies
that $C_{n - 1} \cap S_n = S_{n - 1} \cap S_n$.  We know that $\abs{S_{n - 1}
\cap S_n} = \max\fpr{0, 1 + \first{i_{n - 1}} - a_n}$.  This implies that on
Line~\ref{alg:support:sum} we have $d = \abs{S_n - S_{n - 1}}$ and since
$\abs{C_n} = \abs{C_{n - 1}} + d$, this proves the theorem.
\end{proof}

Finally, we define a closure episode of an instance set.

\begin{definition}
Let $I = \inst{G}$ be a set of instances. 
We define an \emph{instance-closure}, $H = \icl{I}$ to
be the episode having the same nodes and events as $G$. We define the edges
\[
\begin{split}
    \pe{H} & = \set{(a, b) \mid \ts{i(a)} < \ts{i(b)} \ \forall i \in I} \text{ and} \\
	\we{H} & = \set{(a, b) \mid \ts{i(a)} \leq \ts{i(b)}\  \forall i \in I} - \pe{H}.
\end{split}
\]
If $G = \icl{I}$ we say that $G$ is $i$-closed.
\end{definition}

Consider sequence $abcbdacbcd$ and episodes given in Figure~\ref{fig:closure}.
Since events $b$ and $c$ always occur between $a$ and $d$ in $\inst{G_1}$,
the instance closure is $\icl{\inst{G_1}} = G_2$. Note that $G_2$ is not closed
because $G_3$ and $G_4$ are both superepisodes of $G_2$ with the same support.
However, instance-closure reduces the search space dramatically because we do not have to
iterate the edges implied by the closure. Note that the closure may
produce cycles with weak edges. However, we will later show that we can ignore such
episodes.

\section{Discovering Episodes}
\label{sec:algorithm}
We are now ready to describe the mining algorithm. Our approach is a
straightforward depth-first search. The algorithm works on three different
levels.

The first level, consisting of \textsc{Mine} (Alg.~\ref{alg:mine}), and
\textsc{MineParallel} (Alg.~\ref{alg:mineparallel}), adds episode events.
\textsc{Mine} is only used for creating singleton episodes while
\textsc{MineParallel} provides the actual search. The algorithm adds
events so that the labels of the nodes are decreasing, 
$\lab{\node{G, i + 1}} \lexleq \lab{\node{G, i}}$. The search space is
traversed by either adding an event to the last node or by creating a
node with a single event. Episodes $G_1, \ldots, G_5$ in Figure~\ref{fig:state}
are created by the first level. The edge in $G_5$ is augmented by the closure.  

\tikzstyle{statestep} = [draw, -latex', thick, orange!70]
\tikzstyle{state} = [fill=orange!20, draw=orange!70,  inner sep = 1pt, rounded corners]

\begin{figure}[htb!]
\centering

\begin{tikzpicture}
\node[node, anchor = base] (g1n1) {$\emptyset$};

\node[node, below = 0.8cm of g1n1.base, anchor = base] (g2n1) {$G_1: a : 4$};

\node[node, right = 0.4cm of g2n1.base east, anchor = base west] (g3n1) {$G_2: aa : 2$};

\node[node, right = 0.9cm of g1n1.base east, anchor = base west] (g4n1) {$G_3: b : 2$};

\node[node, right = 0.7cm of g4n1.base east, anchor = base west] (g6n1) {$G_4: b$};
\node[node, right = 0.4cm of g6n1.base east, anchor = base west] (g6n2) {$a : 2$};

\node[node, below left = 0.8cm and 2pt of g6n1.base east, anchor = base east] (g7n1) {$G_5: b$};
\node[node, right = 0.4cm of g7n1.base east, anchor = base west] (g7n2) {$aa : 1$};
\path [pedge] (g7n2.mid west) -- (g7n1.mid east);

\node[node, right = 0.6cm of g6n2.base east, anchor = base west] (g8n1) {$G_6: b$};
\node[node, right = 0.4cm of g8n1.base east, anchor = base west] (g8n2) {$a : 1$};
\path [pedge] (g8n2.mid west) -- (g8n1.mid east);

\node[node, below = 0.8cm of g8n1.base, anchor = base] (g9n1) {$G_7: b$};
\node[node, right = 0.4cm of g9n1.base east, anchor = base west] (g9n2) {$a : 1$};
\path[pedge] (g9n1.mid east) -- (g9n2.mid west);

\begin{pgfonlayer}{background}

\node[state, fit=(g6n1) (g6n2)] (g6) {};
\node[state, fit=(g7n1) (g7n2)] (g7) {};
\node[state, fit=(g8n1) (g8n2)] (g8) {};
\node[state, fit=(g9n1) (g9n2)] (g9) {};
\node[state, fit=(g1n1)] (g1) {};
\node[state, fit=(g2n1)] (g2) {};
\node[state, fit=(g3n1)] (g3) {};
\node[state, fit=(g4n1)] (g4) {};

\path[statestep] (g1) -- (g2);
\path[statestep] (g2) -- (g3);
\path[statestep] (g1) -- (g4);
\path[statestep] (g4) -- (g6);
\path[statestep] (g6) -- (g8);
\path[statestep] (g6) -- (g9);
\path[statestep] (g6) -- (g7);

\end{pgfonlayer}
\end{tikzpicture}

\caption{Search space for a sequence $(aa)ba$, window size $\rho = 2$, and support threshold $\sigma = 2$. Each state shows the corresponding episode and its support.}
\label{fig:state}
\end{figure}
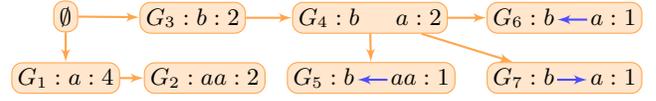

\begin{algorithm}
	\For {$x \in \Sigma$} {
		$I \define \inst{\text{singleton episode with the label }x}$\;
		$G \define \textsc{TestEpisode}(I, \emptyset, \emptyset)$\;
		\lIf {$G \neq \Null$} {
			$\textsc{MineParallel}(I, G)$;
		}
	}
\caption{$\textsc{Mine}$, discovers frequent closed episodes}
\label{alg:mine}
\end{algorithm}

\begin{algorithm}[htb!]
	\Input{episode $G$, $I = \inst{G}$}
	$\textsc{MineWeak}(I, G, \emptyset)$\;
	$M \define \abs{V(G)}$\;
	$n \define \node{G, M}$\;
	\For {$x \in \Sigma, x \geq \max \lab{n}$} {
		\If {$M = 1$ \OR $\lab{n}  \cup \set{x} \lexleq \lab{\node{G, M - 1}}$} {
			$J \define \textsc{AugmentEqual}(I, n, x)$\;
			$H \define \textsc{TestEpisode}(J, \emptyset, \emptyset)$\;
			\If {$H \neq \Null$} {
				$\textsc{MineParallel}(J, H)$\;
			}
		}
	}

	\For {$x \in \Sigma, x \leq \min \lab{n}$} {
		$J \define \textsc{Augment}(I, x)$\;
		$H \define \textsc{TestEpisode}(J, \emptyset, \emptyset)$\;
		\If {$H \neq \Null$} {
			$\textsc{MineParallel}(J, H)$\;
		}
	}
\caption{$\textsc{MineParallel}(I, G)$, recursive routine adding episode events}
\label{alg:mineparallel}
\end{algorithm}

The second level, \textsc{MineWeak}, given in Alg.~\ref{alg:mineweak}, adds weak edges
to the episode, while the third level, \textsc{MineProper}, given in
Alg.~\ref{alg:mineproper}, turns weak edges into proper edges. Both algorithms
add only those edges which keep the episode transitively closed. The algorithms keep a list
of forbidden weak edges $W$ and forbidden proper edges $P$. These lists guarantee that
each episode is visited only once.

$G_6$ and $G_7$ in Figure~\ref{fig:state}
are discovered by \textsc{MineWeak}, however, the weak edges are converted into proper edges
by the instance-closure. 

\begin{algorithm}[htb!]
	\Input{ep. $G$, $I = \inst{G}$, forbidden weak edges $W$}
	$A \define \set{(a, b) \mid a, b \in V(G), (a, b) \notin E(G)}$\;
	$\textsc{MineProper}(I, G, A, \emptyset)$\;
	\For {$(a, b) \notin E(G) \cup W$} {
		\If {$G + (a, b)$ is transitively closed } {
			$J \define \set{i \in I \mid \ts{i(a)} \leq \ts{i(b)}}$\;
			$H \define \textsc{TestEpisode}(J, W, \emptyset)$\;
			\If {$H \neq \Null$} {
				$\textsc{MineWeak}(J, H, W)$\;
			}
			Add $(a, b)$ to $W$\;
		}
	}
\caption{$\textsc{MineWeak}(I, G, W)$, recursive routine adding weak edges}
\label{alg:mineweak}
\end{algorithm}

\begin{algorithm}[htb!]
	\Input{episode $G$, $I = \inst{G}$, forbidden weak edges $W$, forbidden proper edges $P$}
	\For {$(a, b) \in \we{G} - P$} {
		\If {$G\,+$ proper edge $(a, b)$ is transitively closed } {
			$J \define \set{i \in I \mid \ts{i(a)} < \ts{i(b)}}$\;
			$H \define \textsc{TestEpisode}(J, W, P)$\;
			\If {$H \neq \Null$} {
				$\textsc{MineProper}(J, H, W, P)$\;
			}
			Add $(a, b)$ to $P$\;
		}
	}
\caption{$\textsc{MineProper}(I, G, W, P)$, recursive routine adding proper edges}
\label{alg:mineproper}
\end{algorithm}

At each step, we call \textsc{TestEpisode}. This routine, given a set of
instances $I$, will compute the instance-closure $\icl{I}$ and test it. If the
episode passes all the tests, the algorithm will return the episode and the
search is continued, otherwise the branch is terminated.  There are four
different tests. The first test checks whether the episode is frequent.  If we
pass this test, we compute the instance-closure $H = \icl{I}$. The second test checks whether $H$
contains cycles.  Let $G$ be an episode such that $I = \inst{G}$. In order for $H$
to have cycles we must have two nodes, say $n$ and $m$, such that $\ts{i(n)} =
\ts{i(m)}$ for all $i \in I$. Let $G'$ be an episode obtained from $G$ by
merging nodes in $n$ and $m$ together. We have $G \preceq G'$ and
$\freq{G} = \freq{G'}$. This holds for any subsequent episode
discovered in the branch allowing us to ignore the whole branch.

If the closure introduces into $H$ any edge that has been explored in the
previous branches, then that implies that $H$ has already been discovered. Hence, we
can reject $H$ if any such edge is introduced. The final condition is that during
\textsc{MineProper} no weak edges should be added into $H$ by the closure.
If a weak edge is added, we can reject $H$ because it can be reached via an alternative route, by letting
\textsc{MineWeak} add the additional edges and then calling \textsc{MineProper}.

The algorithm keeps a list $\efam{C}$ of all discovered episodes that are
closed.  If all four tests are passed, the algorithm tests whether there are
subepisodes of $G$ in $\efam{C}$ having the same frequency, and deletes them.
On the other hand, if there is a superepisode of $G$ in $\efam{C}$, then
$G$ is not added into $\efam{C}$.

\begin{algorithm}[htb!]
	\Input{$I = \inst{G}$, forbidden weak edges $W$, forbidden proper edges $P$}
	\Out{$\icl{I}$, if $\icl{I}$ passes the tests, \Null otherwise}
	$f \define \textsc{Support}(I)$\;
	\lIf {$f < \sigma$} {
		\Return \Null;
	}
	$G \define \icl{I}$\;
	\If {there are cycles in $G$} {
		\Return \Null\;
	}
	\If {$\we{G} \cap W \neq \emptyset$ \OR $\pe{G} \cap P \neq \emptyset$} {
		\Return \Null\;
	}
	$\freq{G} \define f$\;
	\ForEach {$H \in \efam{C}, \lab{H} \cap \lab{G} \neq \emptyset$} {
		\If {$\freq{G} = \freq{H}$ \AND $G \preceq H$} {
			\Return $G$\;
		}
		\If {$\freq{G} = \freq{H}$ \AND $H \preceq G$} {
			Delete $H$ from $\efam{C}$\;
		}
	}
	Add $G$ to $\efam{C}$\;
	\Return $G$\;
\caption{$\textsc{TestEpisode}(I, W, P)$, tests the episode $\icl{I}$ and updates $\efam{C}$, the list of discovered episodes}
\label{alg:testepisode}
\end{algorithm}

\section{Testing Subepisodes}
\label{sec:compute}
In this section we will describe the technique for computing $G \preceq H$.  In
general, this problem is difficult to solve as pointed out by
Theorem~\ref{thr:hardness}.  Fortunately, in practice, a major part of the
comparisons are easy to do.  The following theorem says that if the labels of
$G$ are unique in $H$, then we can easily compare $G$ and $H$.

\begin{theorem}
\label{thr:easy}
Assume two episodes $G$ and $H$. Assume that $\lab{G}
\subset \lab{H}$ and for each event $e$ in $G$ only one event occurs in $H$
with the same label. Let $\pi$ be the unique mapping from episode events in $G$
to episode events in $H$ honouring the labels.  Then $G \preceq H$ if and only if

\begin{enumerate*}
\item $\node{e} = \node{f}$ implies that $\node{\pi(e)} = \node{\pi(f)}$,
\item $\node{e}$ is a proper child of $\node{f}$ implies that
$\node{\pi(e)}$ is a proper child of $\node{\pi(f)}$,
\item $\node{e}$ is a child of $\node{f}$ implies that $\node{\pi(e)}$ is a
child of $\node{\pi(f)}$ or $\node{\pi(e)} = \node{\pi(f)}$,
\end{enumerate*}
for any two events $e$ and $f$ in $G$.
\end{theorem}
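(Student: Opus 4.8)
The plan is to route both directions through a single observation: because every label of $G$ is unique in $H$ (and hence, by $\lab{G} \subset \lab{H}$, unique in $G$ as well), the map $\pi$ is well defined and injective, and whenever a sequence contains exactly one event for each label of $G$, any covering map of $G$ into it is \emph{forced}. I would exploit this by working with sequences built directly from $H$. Call an integer assignment $t$ on the nodes of $H$ \emph{valid} if $t(a) < t(b)$ for every proper edge $(a,b) \in \pe{H}$ and $t(a) \le t(b)$ for every weak edge $(a,b) \in \we{H}$; since $H$ is a DAG such assignments exist, and each one yields a sequence $s_t$ containing, for every episode event $g$ of $H$, one sequence event with label $\lab{g}$ and time stamp $t(\node{g})$. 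By construction $s_t$ covers $H$.

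For the \emph{if} direction I would assume the three conditions and prove $G \preceq H$ directly. Given any sequence $s$ covering $H$ via a map $m$, set $m' = m \circ \pi$ and verify the four coverage requirements for $G$. Labels are respected since $\pi$ honours them and $m$ respects labels, and $m'$ is injective since both $m$ and $\pi$ are. As $G$ is transitively closed, every descendant relation in $G$ is realised by a direct edge, so it suffices to treat child and proper-child relations: condition 1 turns $\node{e}=\node{f}$ into equal nodes in $H$, hence equal time stamps under $m$; condition 3 turns a child in $G$ into either a weak edge in $H$ or a node equality, each giving $\ts{m'(e)} \ge \ts{m'(f)}$; condition 2 turns a proper child in $G$ into a proper edge in $H$, giving the required strict inequality. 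Thus $s$ covers $G$.

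For the \emph{only if} direction I would assume $G \preceq H$ and read the conditions off the forced maps. For any valid $t$ the sequence $s_t$ covers $H$, hence covers $G$; since each label of $G$ occurs once in $s_t$, the covering map sends each event $e$ of $G$ to time stamp $t(\node{\pi(e)})$. The coverage axioms for $G$ therefore assert, \emph{for every valid $t$}, that $t(\node{\pi(e)}) = t(\node{\pi(f)})$ whenever $\node{e}=\node{f}$, that $t(\node{\pi(e)}) \ge t(\node{\pi(f)})$ whenever $\node{e}$ is a descendant of $\node{f}$, and that $t(\node{\pi(e)}) > t(\node{\pi(f)})$ whenever $\node{e}$ is a proper descendant of $\node{f}$. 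Conditions 1--3 (using that $H$ is transitively closed, so descendant equals child) then follow from a \emph{separation lemma} describing which order relations are forced across all valid $t$: for distinct nodes $u \neq w$ there is a valid $t$ with $t(u) \neq t(w)$; if $w$ is not a descendant of $u$ there is a valid $t$ with $t(w) < t(u)$; and if $w$ is a descendant of $u$ reachable only through weak edges there is a valid $t$ with $t(u) = t(w)$.

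I expect the separation lemma, and in particular its last clause, to be the main obstacle. The first two clauses follow by assigning strictly increasing time stamps along a linear extension of $H$ chosen to place the relevant node first, which automatically respects both weak and proper edges. The delicate case is producing a valid $t$ with $t(u) = t(w)$ when $u$ and $w$ are joined only by weak paths: here I would merge $u$ and $w$ into a single node and argue that, because $H$ has no proper $u$-to-$w$ path and (being acyclic) no path at all from $w$ to $u$, the merge creates neither a proper self-loop nor a proper cycle, so the merged graph still admits a valid assignment, which pulls back to the desired $t$. Once the lemma is established, each failed structural condition produces, via the matching clause, a valid $t$ and hence a sequence covering $H$ but not $G$, contradicting $G \preceq H$.
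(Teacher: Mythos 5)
Your proposal is correct and follows essentially the same route as the paper: the `if' direction by composing the covering map of $H$ with $\pi$, and the `only if' direction by exhibiting, for each violated condition, a sequence covering $H$ on which the label-forced map fails to cover $G$. Your ``separation lemma'' is exactly the paper's Lemma~\ref{lem:tclinstance} (stated there for $H$ in terms of covering maps rather than time-stamp assignments); the paper declares it straightforward and omits its proof, whereas you additionally supply a correct sketch via linear extensions and the weak-merge argument.
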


If the condition in Theorem~\ref{thr:easy} does not hold we will have to resort
to enumerating the sequences covering $H$. In order to do that, we need to
extend the definition of coverage and subset relationship to the set of
episodes.

\begin{definition}
A sequence $s$ covers an episode set $\efam{G}$ if there is an episode $G \in
\efam{G}$ such that $s$ covers $G$. Given two episode sets $\efam{G}$ and $\efam{H}$
we define $\efam{G} \preceq \efam{H}$ if every sequence that covers $\efam{H}$ also
covers $\efam{G}$.
\end{definition}

We also need a definition of a prefix subgraph.
\begin{definition}
Given a graph $G$, a \emph{prefix subgraph} is a non-empty induced subgraph of $G$ with \emph{no} proper edges such that if
a node $n$ is included then the parents of $n$ are also included. Given a
multiset of labels $L$ and an episode $G$ we define $\pre{G, L}$ to be the set
of all maximal prefix subgraphs such that $\lab{V} \subseteq L$ for each $V \in
\pre{G, L}$. We define $\tail{G, L} = \set{G - V \mid V \in \pre{G, L}}$ to
be the episodes with the remaining nodes. Finally, given an episode set $\efam{G}$ we define
$\tail{\efam{G}, L} = \bigcup_{G \in \efam{G}} \tail{G, L}$.
\end{definition}

\begin{example}
Consider episodes given in Figure~\ref{fig:tail}. We have $\tail{G, ab} =
\set{H_1, H_2, H_3}$ and $\tail{G, a} = \set{H_1}$.
\end{example}

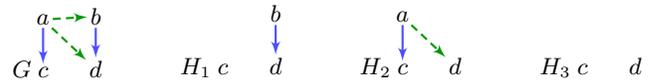
\begin{figure}[h!]
\begin{tikzpicture}
\node[node, anchor = base] (n1) {$a$};
\node[node, right = 0.7cm of n1.base, anchor = base] (n2) {$b$};
\node[node, below = 0.7cm of n1.base, anchor = base] (n3) {$c$};
\node[node, right = 0.7cm of n3.base, anchor = base] (n4) {$d$};
\node[labnode, left = 0cm of n3.base west, anchor = base east] (lab) {$G$};
\path [pedge] (n1) -- (n3);
\path [pedge] (n2) -- (n4);
\path [wedge] (n1) -- (n2);
\path [wedge] (n1) -- (n4);
\end{tikzpicture}\hfill
\begin{tikzpicture}
\node[node, anchor = base] (n1) {};
\node[node, right = 0.7cm of n1.base, anchor = base] (n2) {$b$};
\node[node, below = 0.7cm of n1.base, anchor = base] (n3) {$c$};
\node[node, right = 0.7cm of n3.base, anchor = base] (n4) {$d$};
\node[labnode, left = 0cm of n3.base west, anchor = base east] (lab) {$H_1$};
\path [pedge] (n2) -- (n4);
\end{tikzpicture}\hfill
\begin{tikzpicture}
\node[node, anchor = base] (n1) {$a$};
\node[node, below = 0.7cm of n1.base, anchor = base] (n3) {$c$};
\node[node, right = 0.7cm of n3.base, anchor = base] (n4) {$d$};
\node[labnode, left = 0cm of n3.base west, anchor = base east] (lab) {$H_2$};
\path [pedge] (n1) -- (n3);
\path [wedge] (n1) -- (n4);
\end{tikzpicture}\hfill
\begin{tikzpicture}
\node[node, anchor = base] (n1) {};
\node[node, below = 0.7cm of n1.base, anchor = base] (n3) {$c$};
\node[node, right = 0.7cm of n3.base, anchor = base] (n4) {$d$};
\node[labnode, left = 0cm of n3.base west, anchor = base east] (lab) {$H_3$};
\end{tikzpicture}\hfill

\caption{Toy episodes demonstrating $\tail{G, ab}$.}
\label{fig:tail}
\end{figure}

The main motivation for our recursion is given in the following theorem.

\begin{theorem}
\label{thr:recurse}
Given an episode set $\efam{G}$ and an episode $H$, $\efam{G} \preceq H$ if and only if for each prefix
subgraph $V$ of $H$, we have $\tail{\efam{G}, \lab{V}} \preceq H - V$. 
\end{theorem}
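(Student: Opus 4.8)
The plan is to reduce both implications to a single \emph{first-timestamp decomposition} of coverage, and then to either peel off or glue on one time layer. The engine is the following observation, which I would isolate as a lemma first: for any episode $K$ and sequence $s$, if $B$ denotes the events of $s$ at the earliest time stamp and $r$ the remaining events, then $s$ covers $K$ if and only if there is a prefix subgraph $V$ of $K$ (where I allow $V$ to be empty, corresponding to a covering that places nothing at the earliest layer) with $\lab{V} \subseteq \lab{B}$ such that $r$ covers $K - V$. The ``only if'' direction takes a covering $m$ and lets $V$ be the nodes of $K$ sent into $B$; since $B$ realises the globally smallest time stamp, $V$ has no internal proper edge and is closed under parents (a proper parent would need a strictly smaller stamp, a weak parent an equal one, hence one already in $B$), so $V$ is genuinely a prefix subgraph, and $m$ restricted to $K - V$ shows that $r$ covers $K - V$. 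The ``if'' direction maps $V$ injectively into $B$ (labels fit, and the common stamp satisfies every weak edge inside $V$), maps $K - V$ into $r$ by the given covering, and checks the crossing edges: all of them leave $V$ (parent-closure forbids edges entering $V$ from $K - V$), and they are respected because $B$ precedes $r$.

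For the forward direction, assume $\efam{G} \preceq H$, fix a prefix subgraph $V$ of $H$, and let $s'$ be any sequence covering $H - V$; I must show $s'$ covers $\tail{\efam{G}, \lab{V}}$. Build $s$ by prepending to $s'$ one fresh earliest layer whose multiset of labels is exactly $\lab{V}$. By the ``if'' half of the lemma (with $K = H$, $B$ the new layer, $r = s'$) the sequence $s$ covers $H$, so by hypothesis $s$ covers some $G \in \efam{G}$. Applying the ``only if'' half to this covering yields a prefix subgraph $V'$ of $G$ with $\lab{V'} \subseteq \lab{V}$ such that $s'$ covers $G - V'$. Finally I enlarge $V'$ to a \emph{maximal} prefix subgraph $V'' \in \pre{G, \lab{V}}$ with the same label budget; since $V'' \supseteq V'$, the episode $G - V''$ is obtained from $G - V'$ by deleting further nodes, and coverage is monotone under node deletion (just restrict the covering), so $s'$ still covers $G - V'' \in \tail{G, \lab{V}} \subseteq \tail{\efam{G}, \lab{V}}$.

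For the backward direction, assume $\tail{\efam{G}, \lab{V}} \preceq H - V$ for every prefix subgraph $V$ of $H$, and let $s$ cover $H$; I must produce $G \in \efam{G}$ covered by $s$. Apply the ``only if'' half of the lemma to a covering of $H$: the first layer yields a (now automatically non-empty) prefix subgraph $V$ of $H$ with $\lab{V}$ inside the earliest labels of $s$, and the rest $r$ of $s$ covers $H - V$. The hypothesis for this $V$ gives that $r$ covers some $G - V'' \in \tail{\efam{G}, \lab{V}}$, with $G \in \efam{G}$ and $V'' \in \pre{G, \lab{V}}$. Re-attaching $V''$ to the earliest layer of $s$ (its labels fit because $\lab{V''} \subseteq \lab{V}$ lies within that layer) and invoking the ``if'' half of the lemma with $K = G$ shows that $s$ covers $G$, hence covers $\efam{G}$.

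The main obstacle is the bookkeeping around the prefix layer rather than any deep combinatorics. Two points need genuine care. First, one must verify that the earliest block of a covering is \emph{exactly} a prefix subgraph and that prepending or re-attaching a layer preserves every edge; the decisive property is parent-closure, which guarantees there are no edges pointing \emph{into} the first layer from the remainder, so the crossing constraints are all of the form ``earlier before later'' and are automatically met. Second, the passage between the prefix $V'$ actually realised by the covering and the maximal prefix $V''$ used to define $\tail{\efam{G}, \lab{V}}$ relies on monotonicity of coverage under deleting nodes, and on consistently allowing the degenerate empty prefix (a covering that ignores the earliest layer); this is precisely the case that forces $G$ itself to appear among its own tails when $\lab{V}$ covers none of $G$'s sources, and handling this degenerate layer the same way on both sides is where the argument is most easily gotten wrong.
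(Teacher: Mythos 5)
Your proof is correct and follows essentially the same route as the paper's: both directions rest on decomposing a covering along the earliest time layer of the sequence and peeling off (or gluing back on) the corresponding prefix subgraph. The only cosmetic differences are that you package this decomposition as an explicit if-and-only-if lemma and pass from the realised prefix to a maximal one via monotonicity of coverage under node deletion, where the paper instead remaps the covering onto the first layer; your explicit handling of the degenerate empty prefix matches what the paper leaves implicit.
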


We focus for the rest of this section on implementing the recursion in
Theorem~\ref{thr:recurse}.  We begin by an algorithm, \textsc{Generate}, given
in Alg.~\ref{alg:generate}, that, given a graph without proper edges, discovers
all prefix subgraphs.

\begin{algorithm}[htb!]
	\Input{graph $G$, nodes $V$ discovered so far}
	\Out{list of nodes of all prefix subgraphs}
	$\efam{O} \define \emptyset$\;
	\ForEach {$n \in \sources{G}$} {
		$\efam{O} \define \efam{O} \cup \set{V + n} \cup \textsc{Generate}\fpr{G - n, V + n}$\;
		Remove $n$ and its descendants from $G$\;
	}
	\Return $\efam{O}$\;
\caption{$\textsc{Generate}\fpr{G, V}$, recursive routine for iterating the nodes of all prefix subgraphs of $G$}
\label{alg:generate}
\end{algorithm}

Given a prefix subgraph $V$ of $H$, our next step is to discover all maximal prefix subgraphs of $G$
whose label sets are subsets of $L = \lab{V}$.
The algorithm,
\textsc{Consume}, creating this list, is given in Alg.~\ref{alg:consume}.
\textsc{Consume} enumerates over all sources. For each source $n$ such that
$\lab{n} \subseteq L$, the algorithm tests if there is another node sharing a
label with $n$. If so, the algorithm creates an episode without $n$ and
its descendants and calls itself. This call produces a list of prefix graphs
$\efam{W}$ not containing node $n$. The algorithm removes all graphs from
$\efam{W}$ that can be augmented with $n$ (since in that case they are not maximal).
Finally, \textsc{Consume} adds $n$ to the current prefix subgraph,
removes $n$ from $G$, and removes $\lab{n}$ from $L$.

\begin{algorithm}[htb!]
	\Input{graph $G$, nodes $V$ discovered so far, $L$ label set}
	\Out{list of nodes of all maximal prefix subgraphs}
	$\efam{O} \define \emptyset$\;
	\While {$\sources{G} \neq \emptyset$} {
		$n \define$ a node from $\sources{G}$\;
		\If {$\lab{n} \nsubseteq L$} {
			Remove $n$ and its descendants from $G$\;
			\Continue\;
		}
		\If {there is $m \in V(G)$ s.t. $\lab{n} \cap \lab{m} \neq \emptyset$} {
			$H \define G$ with $n$ and its descendants removed\;
			$\efam{W} \define \textsc{Consume}\fpr{H, L, V}$\;
			$\efam{O} \define \efam{O} \cup \set{W \in \efam{W} \mid \lab{W} \cup \lab{n} \nsubseteq L}$\;
		}

		$V \define V + n$; $L \define L - \lab{n}$\;
		Remove $n$ from $G$\;
	}
	\Return $\efam{O} \cup \set{V}$\;
\caption{$\textsc{Consume}\fpr{G, L, V}$, recursive routine for iterating the nodes of all prefix subgraphs of $G$ whose labels are contained in $L$}
\label{alg:consume}
\end{algorithm}

We are now ready to describe the recursion step of Theorem~\ref{thr:recurse}
for testing the subset relationship. The algorithm, \textsc{Step}, is given in
Alg.~\ref{alg:step}. Given an episode set $\efam{G}$ and a graph $H$, the
algorithm computes $\efam{G} \preceq H$. First, it tests whether we can apply
Theorem~\ref{thr:easy}.  If this is not possible, then the algorithm first
removes all nodes from $H$ not carrying a label from $\lab{G}$ for $G \in
\efam{G}$. This is allowed because of the following lemma.

\begin{lemma}
\label{lem:reduce}
Let $G$ and $H$ be two episodes. Let $n \in V(H)$ be a node
such that $\lab{n} \cap \lab{G} = \emptyset$.  Let $H'$ be the episode obtained
from $H$ by removing $n$.  Then $G \preceq H$ if and only if $G \preceq H'$.
\end{lemma}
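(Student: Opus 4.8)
The plan is to prove the two implications separately, observing that only the direction from $H$ to the reduced graph $H'$ will actually use the label condition $\lab{n} \cap \lab{G} = \emptyset$.

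For the direction $G \preceq H' \Rightarrow G \preceq H$, I would first record the monotonicity fact that $H' \preceq H$. Indeed, if a sequence $s$ covers $H$ through a mapping $m$, then restricting $m$ to the episode events not mapped to $n$ yields an injective, label-respecting map; since the edges of the induced subgraph $H'$ form a subset of those of $H$ and descendant relations in $H'$ are witnessed by paths that also live in $H$, this restricted map still honours every weak and proper edge of $H'$, so $s$ covers $H'$. Chaining $H' \preceq H$ with the hypothesis $G \preceq H'$ by transitivity of $\preceq$ gives $G \preceq H$. This half needs no assumption on the labels of $n$.

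The substantial direction is $G \preceq H \Rightarrow G \preceq H'$. I would take any sequence $s$ covering $H'$ via a mapping $m'$ and aim to produce a cover of $G$ by $s$. The idea is to re-insert the deleted node $n$: build an augmented sequence $s^{*}$ covering the full episode $H$, apply $G \preceq H$ to obtain a cover of $G$ inside $s^{*}$, and then use the label disjointness to pull this cover back to $s$. To realise $n$ I need a single time stamp $t^{*}$, shared by all its events, lying strictly above the images of every proper ancestor of $n$, at least at the images of its non-proper ancestors, strictly below the images of every proper descendant, and at most at the images of its non-proper descendants. Here I would invoke that $H$ is transitively closed: whenever $a$ is an ancestor and $d$ a descendant of $n$, concatenating the two paths through $n$ produces a direct edge $a \to d$ in $H$ that survives the deletion, strict exactly when that combined path carries a proper edge. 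Since $m'$ covers $H'$ it already respects this edge, so every ancestor image lies weakly, and every proper-related pair strictly, below the matching descendant image; hence the admissible band for $t^{*}$ is non-empty over the reals.

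The main obstacle is that time stamps are integers, so a proper ancestor and a proper descendant whose images differ by exactly one leave no integer slot for a strictly intermediate $t^{*}$. I would resolve this by first rescaling $s$, replacing each time stamp $t$ by $2t$: this is an order isomorphism, so it preserves the cover of $H'$, while doubling every gap. In the rescaled sequence the four constraints become $t^{*} \ge 2t^{-}_{p}+1$, $t^{*} \ge 2t^{-}_{w}$, $t^{*} \le 2t^{+}_{p}-1$, and $t^{*} \le 2t^{+}_{w}$, where $t^{-}_{p}, t^{-}_{w}$ and $t^{+}_{p}, t^{+}_{w}$ are the extreme proper/weak ancestor and descendant images; the transitive-closure inequalities above force the largest lower bound to be at most the smallest upper bound, so an integer $t^{*}$ exists. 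Assigning every event of $n$ the time stamp $t^{*}$ together with fresh ids (reindexing to keep the id/time-stamp monotonicity of a sequence) produces $s^{*}$ and an injective map covering $H$. Since $G \preceq H$, some mapping covers $G$ in $s^{*}$; because $\lab{n} \cap \lab{G} = \emptyset$ this mapping can use none of the newly inserted events, so it lands entirely in the rescaled copy of $s$, and composing with the halving isomorphism yields a cover of $G$ by $s$, completing the proof.
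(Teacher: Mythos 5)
Your proof is correct and follows essentially the same route as the paper's: extend a sequence covering $H'$ by re-inserting the deleted node $n$ at a time stamp sandwiched between the images of its ancestors and descendants (using transitive closure of $H$ to guarantee such a slot exists), then invoke $G \preceq H$ and the label disjointness $\lab{n} \cap \lab{G} = \emptyset$ to pull the resulting cover of $G$ back into the original sequence. Your rescaling step is in fact a small refinement: the paper places the new event at $t = \tfrac{1}{2}(t_1 + t_2)$, which need not be an integer time stamp, whereas doubling all time stamps first makes the construction literally valid under the paper's integer-time-stamp definition.
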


\textsc{Step} continues by creating a subgraph $Y$ of $H$ containing only
proper sources.  The algorithm generates all prefix subgraphs $\efam{V}$ of $Y$
and tests each one. For each subgraph $V \in \efam{V}$, \textsc{Step} calls
\textsc{Consume} and builds an episode set $\efam{T} = \tail{\efam{G},
\lab{V}}$. The algorithm then calls itself recursively with
$\textsc{Step}(\efam{T}, H - V)$.  If at least one of these calls fails, then we know
that $\efam{G} \npreceq H$, otherwise $\efam{G} \preceq H$.

\begin{algorithm}[tb!]
	\Input{episode set $\efam{G}$ and an episode $H$}
	\Out{$\efam{G} \preceq H$}

	\ForEach {$G \in \efam{G}$} {
		\If {Theorem~\ref{thr:easy} guarantees that $G \preceq H$} {
			\Return \True\;
		}
		\If {Theorem~\ref{thr:easy} states that $G \npreceq H$ \AND $\abs{\efam{G}} = 1$} {
			\Return \False\;
		}
	}

	$V(H) \define \set{n \in V(H) \mid \lab{n} \cap \lab{G} \neq \emptyset, G \in \efam{G}}$\;
	$Y \define $ subgraph of $H$ with only proper sources\; 
	$\efam{V} \define \textsc{Generate}(Y, \emptyset)$\;
	\ForEach {$V \in \efam{V}$} {
		$F \define H - V$; $\efam{T} \define \emptyset$\;

		\ForEach {$G \in \efam{G}$} {
			$X \define $ subgraph of $G$ with only proper sources\; 
			$\efam{W} \define \textsc{Consume}(X, \lab{V}, \emptyset)$\;
			$\efam{T} \define \efam{T} \cup \set{G - W \mid W \in \efam{W}}$\;
		}

		\If {$\textsc{Step}(\efam{T}, F) = \False$} {
			\Return \False\;
		}
	}
	\Return \True\;
\caption{$\textsc{Step}\fpr{\efam{G}, H}$, recursion solving $\efam{G} \preceq H$}
\label{alg:step}
\end{algorithm}

Finally, to compute $G \preceq H$, we simply call $\textsc{Step}(\set{G}, H)$.

\section{Experiments}
\label{sec:exp}

We begin our experiments with a synthetic dataset. Our goal is to demonstrate the
need for using the closure. In order to do that we created sequences with a planted
pattern $(s_1s_2)(s_3s_4)\cdots(s_{2N - 1}s_{2N})$. We added this pattern $100$
times 50 time units apart from each other. We added $500$ noise events uniformly
spreading over the whole sequence. We sampled the labels for the noise events
uniformly from $900$ different labels. The labels for the noise and the labels
of the pattern were mutually exclusive. We varied $N$ from $1$ to $7$.  We ran
our miner using a window size of $\rho = 10$ and varied the support threshold
$\sigma$.  The results are given in Table~\ref{tab:synthetic}.

\begin{table}[htb!]
\caption{Results from synthetic sequences with a planted episode $P$.
The columns show the number of nodes in $P$, the support
threshold, the size of the final output, the
number of instance-closed episodes, the number of
subepisodes of $P$, and the number of sequence scans, respectively.}
\vspace{0.1cm}
\centering
\begin{tabular}{lrrrrr}
\toprule
$\abs{V(P)}$ & $\sigma$ & $\abs{\efam{C}}$ & $i$-closed & frequent & scans\\
\midrule
1 & 100 & 1 & 3 & 3 & 46\\
2 & 100 & 3 & 15 & 27 & 200 \\
3 & 100 & 6 & 63 & 729 & 744 \\
4 & 100 & 10 & 255 & $59\,049$ & $3\,964$ \\
5 & 100 & 15 & $1\,023$ & $14\,348\,907$ & $11\,123$\\
6 & 100 & 21 & $4\,095$ & $\approx 10^{10}$ & $48\,237$ \\
7 & 100 & 28 & $16\,383$ & $\approx 2 \times 10^{13}$ & $191\,277$  \\
7 & 50  & 29 & $16\,384$ & $>2 \times 10^{13}$ & $191\,243$  \\
7 & 40  & 32 & $16\,387$ & $>2 \times 10^{13}$ & $191\,298$  \\
7 & 30  & 39 & $16\,394$ & $>2 \times 10^{13}$ & $191\,463$  \\
7 & 20  & 127 & $16\,488$ & $>2 \times 10^{13}$ & $197\,773$  \\
7 & 10  & 684 & $52\,297$ & $>2 \times 10^{13}$ & $480\,517$  \\
\bottomrule
\end{tabular}
\label{tab:synthetic}
\end{table}

When we are using a support threshold of $100$, the only closed frequent patterns
are the planted pattern and its subpatterns of form $(s_is_{i +
1})\cdots(s_js_{j + 1})$, since the frequency of these subpatterns is slightly
higher than the frequency of the whole pattern. The number of instance-closed episodes (given in
the $4^{\text{th}}$ column of Table~\ref{tab:synthetic}) grows more rapidly. The reason for
this is that the instance-closure focuses on the edges and does not add any new
nodes. However, this ratio becomes a bottleneck only when we are dealing with
extremely large serial episodes, and for our real-world datasets this ratio is
relatively small. 

The need for instance-closure becomes apparent when the number of instance-closed episodes is
compared to the number of all possible general subepisodes (including those that are not tranistively closed) of a planted
pattern, given in the $5^{\text{th}}$ column of Table~\ref{tab:synthetic}. We see that had
we not used instance-closure, a single pattern having 6 nodes and 12 events
renders pattern discovery infeasible.

As we lower the threshold, the number of instance-closed episodes and closed
episodes increases, however the ratio between instance-closed and closed
episodes improves. The reason for this is that the output contains episodes other than our planted episode, and those mostly contain
a small number of nodes. 

We also measured the number of sequence scans, namely the number of calls made to
\textsc{Support}, to demonstrate how fast the negative border is growing. Since
our miner is a depth-first search and the computation of frequency is based on
instances, a single scan is fast, since we do not have to scan the whole
sequence.

Our second set of experiments was conducted on real-world data. The dataset
consists of alarms generated in a factory, and contains $514\,502$ events of $9\,595$
different types, stretching over 18 months. An entry in the dataset consists of
a time stamp and an event type.  Once again, we tested our algorithm at various
thresholds, varying the window size from 3 to 15 minutes. Here, too, as shown
in Table~\ref{tab:alarms}, we can see that the number of $i$-closed episodes
does not explode to the level of all frequent episodes, demonstrating the need
for using the $i$-closure as an intermediate step. Furthermore, we see that in
a realistic setting, the number of $i$-closed episodes stays closer to the
number of closed episodes than in the above-mentioned synthetic dataset. This
is no surprise, since real datasets tend to have a lot of patterns containing a small
number of events.

As the discovery of all frequent episodes is infeasible, we estimated their number as follows. An episode $G$ has
$a(G) = 3^{\abs{\pe{G}}}2^\abs{\we{G}}$ subepisodes (including those that are not transitively closed) with the same
events and nodes. From the discovered $i$-closed episodes we selected a subset
$\efam{G}$ such that each $G \in \efam{G}$ has a unique set of events and a
maximal $a(G)$. Then the lower bound for the total number of frequent episodes is $\sum_{G \in
\efam{G}} a(G)$. Using such a lower bound is more than enough to confirm that the number of frequent episodes explodes much faster than the number of closed and $i$-closed episodes, as can be seen in Table~\ref{tab:alarms}.

Furthermore, our output contained a considerable number of episodes with
simultaneous events --- patterns that no existing method would have
discovered. The runtimes ranged from just under 2 seconds to 90 seconds for the
lowest reported thresholds.

\begin{table}[htb!]
\caption{Results from the alarms dataset.}
\vspace{0.1cm}
\centering
\begin{tabular}{lrrrrr}
\toprule
win (s) & $\sigma / 10^3$ & $\abs{\efam{C}}$ & $i$-closed & freq.(est) & scans\\
\midrule
180 & $500$ & 6 & 6 & 6 & 194\\
180 & $400$ & 8 & 8 & 8 & 220\\
180 & $300$ & 12 & 12 & 12 & 282\\
180 & $240$ & 23 & 26 & 26 & 792\\
\midrule
600 & $2\,000$ & 4 & 4 & 4 & 128\\
600 & $1\,000$ & 24 & 27 & $39$ & 374\\
600 & $500$ & 90 & 137 &  493 & $1\,196$\\
600 & $280$ & 422 & 698 & $2\,321$ & $8\,758$\\
\midrule
900 & $2\,000$ & 24 & 26 & $40$ & 350\\
900 & $1\,000$ & 52 & 58 & $94$ & 745\\
900 & $500$ & 280 & 426 & $1\,997$ & $4\,604$\\
900 & $350$ & $1\,845$ & $9\,484$ & $190\,990$ & $63\,735$\\
\bottomrule
\end{tabular}
\label{tab:alarms}
\end{table}

Our third dataset consisted of trains delayed at a single railway station in Belgium. The dataset consists of actual departure times of delayed trains, coupled with train numbers, and contains $10\,115$ events involving $1\,280$ different train IDs, stretching over a period of one month. A window of 30 minutes was chosen by a domain expert. The time stamps were expressed in seconds, so a single train being delayed on a particular day would be found in 1800 windows. Therefore, the frequency threshold for interesting patterns had to be set relatively high. The results are shown in Table~\ref{tab:trains}, and were similar to those of the alarm dataset. The runtimes ranged from a few milliseconds to 55 minutes for the lowest reported threshold. The largest discovered pattern was of size 10, and the total number of frequent episodes at the lowest threshold was at least $33$ million, once again demonstrating the need for both outputting only closed episodes and using instance-closure.

\begin{table}[htb!]
\caption{Results from the trains dataset.}
\vspace{0.1cm}
\centering
\begin{tabular}{lrrrr}
\toprule
$\sigma$ & $\abs{\efam{C}}$ & $i$-closed & freq.(est) & scans\\
\midrule
$30\,000$ & 141 & 141 & 141 & $9\,575$\\
$20\,000$ & $1\,994$ & $1\,995$ & $2\,593$ & $219\,931$\\
$17\,000$ & $8\,352$ & $8\,416$ & $22\,542$ & $812\,363$\\
$15\,000$ & $26\,170$ & $26\,838$ & $172\,067$ & $2\,231\,360$\\
$13\,000$ & $94\,789$ & $101\,882$ & $3\,552\,104$ & $6\,865\,877$\\
$12\,000$ & $189\,280$ & $211\,636$ & $33\,660\,094$ & $12\,966\,895$\\
\bottomrule
\end{tabular}
\label{tab:trains}
\end{table}

\section{Conclusions}
\label{sec:conclusions}
In this paper we introduce a new type of sequential pattern, a general episode
that takes into account simultaneous events, and provide an efficient
depth-first search algorithm for mining such patterns.

This problem setup has two major challenges. The first challenge is the pattern explosion
which we tackle by discovering only closed episodes. Interestingly enough, we cannot
define closure based on frequency, hence we define closure based on instances. While it holds that frequency-closed episodes are instance-closed, the
opposite is not true. However, in practice, instance-closure reduces search
space dramatically so we can mine all instance-closed episodes and discover 
frequency-closed episodes as a post-processing step.

The second challenge is to correctly compute the subset relationship between
two episodes. We argue that using a subset relationship based on graphs is not
optimal and will lead to redundant output. We define a subset relationship
based on coverage and argue that this is the correct definition. This
definition turns out to be \np-hard. However, this is not a problem since in
practice most of the comparisons can be done efficiently.

\section{Acknowledgments}
Nikolaj Tatti is supported by a Post-doctoral Fellowship of the Research Foundation -- Flanders (\textsc{fwo}).

The authors wish to thank Toon Calders for providing the proof that checking whether a sequence covers an episode is \textbf{NP}-hard on a small piece of
paper.

\bibliography{bibliography}

\pagebreak
\appendix
\section{Proofs of Theorems}
\begin{proof}[of Theorem~\ref{thr:npcomplete}]
If $s$ covers $G$, then the map $f$ mapping the nodes of $G$ to indices of $s$
provides the certificate needed for the verification. Hence, testing the
coverage is in \np.

In order to prove the completeness we reduce \textsc{3SAT} to the coverage. In
order to do so, given the formula $F$, we will build an episode $G$ and a
sequence such that sequence $s$ covers $G$ if and only if $F$ is
satisfiable.

Assume that we are given a formula $F$ with $M$ variables and $L$ clauses.  We
define the alphabet for the labels to be $\Sigma = \enset{\alpha_1}{\alpha_M}
\cup \enset{\beta_1}{\beta_L}$, where $\alpha_i$ is identified with the $i$th
variable and $\beta_j$ is identified with $j$th clause.

We will now construct $G$. The nodes in the episode consist of
three groups. The first group, $P = \enset{p_1}{p_M}$, contains $M$ nodes. A
node $p_i$ is labelled as $\lab{p_i} = \alpha_i$. These nodes represent the
positive instantiation of the variables. The second group of nodes, $N =
\enset{n_1}{n_M}$, also contains $M$ nodes, and the labels are again $\lab{n_i}
= \alpha_i$. These nodes represent the negative instantiation of the variables.
Our final group is $C = \enset{c_1}{c_{3L}}$, contains $3L$ nodes, $3$ nodes
for each clause. The labels for these nodes are $\lab{c_{3j - 2}} =
\lab{c_{3j -1}} = \lab{c_{3j}} = \beta_j$. The edges of the episode $G$ are as
follows: Let $\beta_j$ be the $j$th clause in $F$ and let $\alpha_i$ be the
$k$th variable occurring in that clause. Note that $k = 1, 2, 3$. We connect
$p_i$ to $c_{3j + k - 3}$ if the variable is positive in the clause. Otherwise,
we connect $n_i$ to $c_{3j + k - 3}$.

The sequence $s$ consists of $5$ consecutive subsequences $s = s_1s_2s_3s_4s_5$. We define $s_1 = s_3 = \alpha_1\cdots \alpha_M$
and $s_2 = s_4 = s_5 = \beta_1 \cdots \beta_L$, that is, $s$ is equal to
\[
\alpha_1 \cdots \alpha_M \beta_1 \cdots \beta_L \alpha_1 \cdots \alpha_M \beta_1 \cdots \beta_L \beta_1 \cdots \beta_L.
\]

Our final step is to prove that $s$ covers $G$ whenever $F$ is satisfiable.
First assume that $F$ is satisfiable and let $t_i$ be the truth assignment for
the variable $\alpha_i$. We need to define a mapping $f$. If $t_i$ is true, then we
map $p_i$ into the first group $s_1$ and $n_i$ into the third group $s_3$. If
$t_i$ is false, then we map $n_i$ into $s_1$ and $p_i$ into $s_3$. Since each
clause is now satisfied, among the nodes $c_{3j - 2}$, $c_{3j - 1}$, and
$c_{3j}$ there is at least one node, say $c_k$, such that the parent of that
node ($p_i$ or $n_i$) is mapped to the first group $s_1$. We can map $c_k$ into
the second group $s_2$. The remaining two nodes are mapped into the fourth and the
fifth group, $s_4$ and $s_5$. Clearly this mapping is valid since all the nodes
are mapped and the edges are honoured.

To prove the other direction, let $f$ be a valid mapping of $G$ into $s$.
Since the sequence has the same amount of symbols as there are nodes in the
graph the mapping $f$ is surjective.  Define a truth mapping by setting the
$i$th variable to true if $p_i$ occurs in $s_1$, and false otherwise. Each
symbol in the second group $s_2$ is covered. Select one symbol, say $\beta_j$
from that group. The corresponding node, say $c_k$, has a parent (either $p_i$
or $n_i$) that must be mapped into the first group. This implies that the
truth value for the $i$th variable satisfies the $j$th clause. Since all clauses
are satisfied, $F$ is satisfied. This completes the proof.
\end{proof}

\begin{proof}[of Theorem~\ref{thr:easy}]
To prove the theorem we will use the following straightforward lemma.

\begin{lemma}
\label{lem:tclinstance}
Let $G$ be a transitively closed episode. Let $e$ and $f$ be two nodes. Then
there exists a sequence $s$ covering $G$ and a mapping $m$ such that
\setlength{\leftmargin}{0pt}
\begin{enumerate*}
\item if $\node{e} \neq \node{f}$, then $\ts{m(e)} \neq \ts{m(f)}$,
\item if $\node{e}$ is not a child of $\node{f}$, then\\$\ts{m(e)} < \ts{m(f)}$,
\item if $\node{e}$ is not a proper child of $\node{f}$, then\\$\ts{m(e)} \leq \ts{m(f)}$,
\end{enumerate*}
\end{lemma}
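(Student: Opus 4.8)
The plan is to reduce all three clauses to a single, uniform question: \emph{which integer time-stamp assignments to the nodes of $G$ arise from a covering?} Since $G$ is transitively closed, I claim a covering of $G$ is essentially the same object as a labelling $t$ of the nodes with $t(\node{f}) \leq t(\node{e})$ whenever there is a weak edge $\node{f}\to\node{e}$ and $t(\node{f}) < t(\node{e})$ whenever the edge is proper. Given such a $t$, I would build $s$ by creating, for every episode event $x$, one sequence event of label $\lab{x}$ and time stamp $t(\node{x})$, assigning ids in time-stamp order (so the id/time-stamp monotonicity holds); the map $m$ sending $x$ to this event is injective, respects labels, and sends events of a common node to a common time stamp. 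Because $G$ is transitively closed, the per-edge inequalities on $t$ already imply the per-descendant inequalities demanded by the coverage conditions, so $s$ covers $G$ via $m$. Hence it suffices, for each clause, to exhibit a valid $t$ with the stated property, and I will allow a \emph{different} $t$ per clause (this is necessary: a single covering cannot simultaneously satisfy all three when $\node{e}$ is a weak, non-proper child of $\node{f}$).

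For clause~(1) I would take any linear extension of the DAG, i.e.\ an injective $t$ respecting the topological order. Every weak edge is honoured (the child lands strictly later, so $\geq$ holds) and every proper edge is honoured; injectivity together with $\node{e}\neq\node{f}$ then gives $\ts{m(e)}\neq\ts{m(f)}$. For clause~(2), transitive closure makes ``child'' and ``descendant'' coincide, so the hypothesis says $\node{e}$ is not reachable from $\node{f}$, i.e.\ no directed path forces $\node{e}$ after $\node{f}$. Thus a linear extension placing $\node{e}$ strictly before $\node{f}$ exists, and reading off $t$ gives $\ts{m(e)}<\ts{m(f)}$.

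Clause~(3) is where the real content lies and is the step I expect to be the main obstacle. If $\node{e}$ is not a descendant of $\node{f}$ at all, clause~(2) already yields $\ts{m(e)}<\ts{m(f)}$, hence $\leq$. The remaining case is $\node{e}$ a \emph{weak} descendant of $\node{f}$ that is not a proper one, where I must produce a covering assigning $\node{e}$ and $\node{f}$ the \emph{same} time stamp. The key observation is that, by transitive closure, ``$\node{e}$ is not a proper child of $\node{f}$'' forces every directed path from $\node{f}$ to $\node{e}$ to consist of weak edges only: a single proper edge on such a path would make $\node{e}$ a proper descendant, and the closure would then install a proper edge $\node{f}\to\node{e}$. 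Consequently I can collapse $\node{f}$, $\node{e}$, and all nodes lying on weak paths between them to one common time stamp without violating any proper edge, since by the same argument no proper edge has both endpoints inside this collapsed set; and as the graph is acyclic there is no reverse path demanding the opposite strict inequality.

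To make clause~(3) rigorous I would phrase it as a constraint graph, giving weak edges weight $0$ and proper edges weight $1$, so that valid integer assignments $t$ are exactly those avoiding positive-weight cycles (of which there are none, as $G$ is acyclic). Adding the two zero-weight constraints $t(\node{e})\geq t(\node{f})$ and $t(\node{f})\geq t(\node{e})$ creates a new cycle only through the weak-only paths from $\node{f}$ to $\node{e}$, all of total weight $0$, so no positive-weight cycle appears and a consistent $t$ with $t(\node{e})=t(\node{f})$ still exists. Realising this $t$ as above gives $\ts{m(e)}=\ts{m(f)}$, hence $\leq$. I would close by noting that the three clauses are witnessed by three possibly distinct coverings, which is exactly what the subsequent proof of Theorem~\ref{thr:easy} requires.
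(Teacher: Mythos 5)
Your proof is correct, but there is little to compare it against: the paper states Lemma~\ref{lem:tclinstance} inside the proof of Theorem~\ref{thr:easy} as a ``straightforward lemma'' and supplies no argument for it at all, so your write-up genuinely fills a gap rather than taking a different route. Two points deserve emphasis. First, you are right that the lemma cannot be read as asserting a single covering satisfying all three clauses simultaneously: when $\node{e}$ is a weak but not proper child of $\node{f}$, clause~(1) demands $\ts{m(e)} \neq \ts{m(f)}$ while clause~(3) together with the weak edge forces $\ts{m(e)} = \ts{m(f)}$. The per-clause reading you adopt is exactly what the proof of Theorem~\ref{thr:easy} needs, since the lemma is invoked there for one violated condition at a time. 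Second, your reduction of ``covering'' to ``integer node-labelling consistent with the per-edge constraints'' is sound (the per-edge inequalities compose along paths, so the descendant conditions of coverage follow even without invoking transitive closure), and the difference-constraint argument for clause~(3) correctly isolates the key fact: every path from $\node{f}$ to $\node{e}$ is weak-only, since a proper edge on such a path would, by transitive closure, install a proper edge from $\node{f}$ to $\node{e}$, contradicting the hypothesis; acyclicity of $G$ then guarantees feasibility, and the added pair of zero-weight constraints creates only zero-weight cycles. One residual blemish, inherited from the lemma statement rather than introduced by you: clause~(2) is triggered but unsatisfiable when $\node{e} = \node{f}$ (a node is not its own child, yet events on a common node must share a time stamp), so both the statement and your clause-(2) construction implicitly assume $\node{e} \neq \node{f}$; this is harmless in context, as Theorem~\ref{thr:easy} only invokes clause~(2) when $\node{\pi(e)} \neq \node{\pi(f)}$.
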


The 'if' part is straightforward so we only prove the 'only if' part.  Assume
that $G \preceq H$ and let $e$ and $f$ be two events in $G$ violating one of
the conditions. Apply Lemma~\ref{lem:tclinstance} with $H$, $\pi(e)$, and
$\pi(f)$ to obtain a sequence $s$ and a mapping $m$. We can safely assume that
$m$ is surjective. Define $m'(x) = m(\pi(x))$ for an event $x$ in $G$.  Since
every label is unique in $s$, $m'$ is the only mapping that will honor the
labels in $G$. Now Lemma~\ref{lem:tclinstance} implies that $m'$ is not a
valid mapping for $G$, hence $G \npreceq H$, which proves the theorem.
\end{proof}

\begin{proof}[of Theorem~\ref{thr:recurse}]

Assume that the condition in the theorem holds and let $s$ be a sequence
covering $H$ and let $m$ be the corresponding mapping. We can safely assume
that $m$ is surjective.  Let $t$ be the smallest time stamp in $s$ and $E$
the events in $s$ having the time stamp $t$. Let $V$ be the corresponding
prefix subgraph, 
\[
	V = \set{n \in V(G) \mid \ts{m(n)} = t}.
\]
By assumption, we have $\tail{\efam{G}, \lab{V}} \preceq H - V$, hence there is
$G \in \efam{G}$ and $W \in \pre{G, L}$ such that $G - W \preceq H - V$ and
$\lab{W} \subseteq \lab{V}$. Let $s'$ be the sequence obtained from $s$ by
removing $E$. Since $s'$ covers $H - V$, there is a map $m'$ mapping $G - W$ to
$s'$. We can extend this map to $G$ by mapping $W$ to $E$. Thus $s$ covers $G$,
hence $s$ covers $\efam{G}$, and by definition $\efam{G} \preceq H$.

To prove the other direction, assume that $\efam{G} \preceq H$. Let $V$ be a
prefix subgraph of $H$. Let $s'$ be a sequence covering $H - V$. We can extend
this sequence to $s$ by adding the events with labels $\lab{V}$ in front of $s'$. Let us
denote these events by $E$.  Sequence $s$ covers $H$, hence it covers $\efam{G}$.

By assumption, there is a mapping $m'$ from $G$ to $s$ for some $G \in
\efam{G}$. Let $W$ be the (possibly empty) prefix graph of $G$ mapping to $E$.
We can assume that $W \in \pre{G, \lab{V}}$, that is, $W$ is maximal,
otherwise let $W' \in \pre{G, \lab{V}}$ such that $W \subset W'$. We can
modify $m'$ by remapping the nodes in $W' - W$ to the events in $E$. This will make $W$ equal to $W'$. Since
$s'$ covers $G - W$, it follows that $s'$ covers $\tail{\efam{G}, \lab{V}}$
which proves the theorem.
\end{proof}

\begin{proof}[of Lemma~\ref{lem:reduce}]
The 'if' part is trivial.  To prove the 'only if' part, assume that $G \preceq
H$. Let $s'$ be a sequence covering $H'$ and let $m'$ be the mapping. If we can
show that $s'$ can be extended to $s$ by adding events with the labels
$\lab{n}$ such that $s$ covers $G$, then the corresponding mapping $m$ will 
also be a valid mapping from $G$ to $s'$, which will prove the lemma.

If $n$ is a source or a sink (a node without children), then we can extend $s'$
by adding the event with the label $\lab{n}$ either at the beginning or at the
end of $s'$.

Assume that $n$ has parents and children.  Let $t_1$ be the largest time stamp
of the parents of $n$ and let $t_2$ be the smallest time stamp of the children
of $n$. Extend $s'$ to $s$ by adding an event $f$ with the label $\lab{n}$ and the time stamp $t
= 1/2(t_1 + t_2)$.  Let $x$ be a child of $n$ and $y$ a parent of $n$. Since
$H$ is transitively closed, $x$ is a child of $y$ in $H'$. Since this holds for
any $x$ and $y$, we have $t_1 \leq t_2$. Hence $\ts{y} \leq t \leq \ts{x}$.  If
$x$ is proper descendant of $n$, then it is also a proper descendant of any parent of
$n$, hence $t_1 < \ts{x}$ and consequently $t < \ts{x}$.  The same holds for $y$.
This implies that $s$ covers $G$.
\end{proof}

\end{document}